\documentclass[12pt,a4paper]{article}

\usepackage[english]{babel}

\usepackage[a4paper,top=2cm,bottom=2cm,left=2.5cm,right=2.5cm,marginparwidth=1.75cm]{geometry}

\usepackage[style=authoryear-comp, backend=biber, sortcites=false]{biblatex}
\usepackage{amsmath}
\allowdisplaybreaks
\usepackage{amssymb}
\usepackage{amsthm}
\usepackage{bbm}
\usepackage{graphicx}
\usepackage[colorlinks=true, allcolors=blue]{hyperref}
\usepackage[title]{appendix}
\usepackage{mathrsfs}
\usepackage{amsfonts}
\usepackage{caption}
\usepackage{authblk}
\usepackage{csquotes}
\usepackage[T1]{fontenc} 
\usepackage{lipsum}
\usepackage{float}

\usepackage{setspace}
\usepackage{titlesec}
\titleformat{\section} 
  {\normalfont\Large\bfseries}{\thesection.}{1em}{}
  
\theoremstyle{plain}
\newtheorem{theorem}{Theorem}
\newtheorem{lemma}[theorem]{Lemma}
\newtheorem{proposition}[theorem]{Proposition}

\newtheorem{assumption}{Assumption}

\theoremstyle{definition}
\newtheorem{definition}{Definition}
\newtheorem{example}{Example}

\title{Selection of Efficient Monetary Equilibria Through Aggregate Real Savings-Based Taylor Rule}

\author[1,2]{Leandro Lyra Braga Dognini}
\affil[1]{\small Department of Economics, Rio de Janeiro State University}
\affil[2]{\small Legislative Advisory, Federal Senate of Brazil}
\date{\today} 

\begin{document}
\maketitle
\begin{abstract}
\noindent This paper uses the generalized Cass criterion $\sum^{\infty}_{t=1}(\Vert p_{t}\Vert\sum_{h\in G_{t}}\Vert e^{h}_{t}\Vert)^{-1}=\infty$ to extend the results from \textcite{Dognini_2026a} regarding the existence of efficient monetary equilibria on consumption-loan overlapping generations economies. These results reveal that if the economy is prone to savings, then monetary equilibria will emerge in a pure non-stationary general equilibrium model with heterogeneous households, thus providing a solution to the \textcite{Hahn_1965} problem. It is also proved that, in prone-to-savings economies, non-vanishing relative aggregate real savings fully characterize efficient monetary equilibria. I use this result to show that a Taylor rule based on an inflation ceiling and a relative aggregate real savings floor can be used to control the price level and lead the economy towards an efficient monetary equilibrium. In contrast, a Taylor rule based solely on an inflation target is able to control the price level but generally leads the economy towards an inefficient monetary equilibrium. 
\end{abstract}

\textbf{Keywords}: Interest-rate rule, overlapping generations, consumption-loan economies.

\textbf{JEL}: D11, D50.

\section{Introduction}\label{sec1}

\textsc{The \textcite{Hahn_1965} problem} (e.g., \textcite[p. 1486]{Bewley_1983}) poses the question of what are the conditions under which a monetary equilibrium (i.e., an equilibrium in which fiat money has value) exists in a general equilibrium model. 

There are several solutions to this problem in the literature based on cash-in-advance restrictions (e.g., \textcite{Clower_1967, Okuno_1978, Santos_1990}), transaction costs (e.g., \textcite{Kurz_1974, OstroyStarr_1974}), taxes (e.g., \textcite{Starr_1974, Starr_2003}), stationary economies (e.g., \textcite{Lucas_1972,Grandmont_1973,Hayashi_1976, RaoAiyagari_1987, Esteban_1994,Burke_1999}) or low-dimensional models (e.g., \textcite{Wallace_1980}).

In a recent paper \parencite{Dognini_2026a}, I have shown that in a general non-stationary consumption-loan overlapping generations model (e.g., \textcite{Samuelson_1958}) (without durable, dividend-paying assets, cash-in-advance constraints, wealth-transfer mechanisms, or transaction costs), if the economy is prone to savings, then there exists an efficient monetary equilibrium. The proof of this existence result relies on the \textcite{Cass_1972} criterion stated in the following form \parencite[pp. 1932-1936]{GeanakoplosPolemarchakis_1991}
\begin{eqnarray}\label{eqCass1}
    \sum^{\infty}_{t=1}\frac{1}{H_{t}\Vert p_{t}\Vert}=+\infty.
\end{eqnarray}

This form, however, assumes that endowments are uniformly bounded. A more general form of the Cass criterion can be achieved if one uses the inequalities associated with the deep projections of the canonical manifolds of consumer theory \parencite{Dognini_2025b,Dognini_2026c}, leading to
\begin{eqnarray}\label{eqCass2}
    \sum^{\infty}_{t=1}\frac{1}{\Vert p_{t}\Vert\sum_{h\in G_{t}}\Vert e^{h}_{t}\Vert}=+\infty.
\end{eqnarray}

The first contribution of this paper, therefore, is to generalize the definitions and results in \textcite{Dognini_2026a} using the Cass criterion form (\ref{eqCass2}) instead of (\ref{eqCass1}). Of particular relevance is the more general definition of prone-to-savings economies (see Definition \ref{defProneSavings} here and Definition 5 in \textcite[p. 6]{Dognini_2026a}), which highlights that, in these economies, low relative aggregate real savings can only occur if the real return rates are small relative to the growth rate of first-period endowments.

Reaching more general results on efficient monetary equilibria is, by itself, an important milestone for properly deploying overlapping generations models, which are used for a wide range of economic applications other than monetary theory itself, stemming from the literature on bubbles (e.g., \textcite{Tirole_1985,SantosWoodford_1997,HiratoToda_2025}) to the one on social security design (e.g., \textcite{Samuelson_1975,Dognini_2026b}). 

The second contribution of this paper is related to the study of interest rate rules (e.g., \textcite[pp. 37-49]{Woodford_2003}) and, more specifically, to that of \textcite{Taylor_1993} rules (i.e., feedback rules that set nominal interest rates as a function of inflation rates rather than the price level itself). 

A central matter of Taylor rules is their relation to equilibrium indeterminacy. This is particularly important in overlapping generations economies since equilibria are generally not locally unique and, therefore, indeterminacy is a \textit{structural} feature of these models (e.g., \textcite{KehoeLevine_1984_1,KehoeLevine_1984_2,KehoeLevine_1985,KehoeLevine_1990,MullerWoodford_1988,KehoeLevineMas-ColellZame_1989,Burke_1990, KehoeLevineMas-ColellWoodford_1991}). 

With this indeterminacy matter in mind, I first show that, in the absence of any interest rate rule, discretionary monetary policy increases the level of indeterminacy of these economies when measured by the cardinality of the respective sets of equilibria (see Proposition \ref{propCardinalitySetsEquilibria}) and poses no restriction on the real outcomes of the economy (e.g., \textcite{1975_SargentWallace}). 

Also, as is common in the literature (e.g., \textcite{Clarida_2000,Benhabib_2001,Cochrane_2011}), I show that Taylor rules act as a \textit{selection mechanism} that excludes equilibria incompatible with it, thus leaving the economy in some of the \textit{stable equilibria} relative to it (see Definition \ref{defTaylorRule}).

Then, I study the set of stable equilibria relative to different Taylor rules and the main result can be summarized as follows: under a Taylor rule based on a not-too-low \textit{inflation ceiling} and a not-too-high \textit{relative aggregate real savings floor}, inflation is bounded and the economy is led towards an efficient monetary equilibrium. In contrast, a Taylor rule based on a not-too-low \textit{inflation target} is also able to bound inflation and stabilize the economy, but generally leads towards inefficient equilibria. 

After this \hyperref[sec1]{Introduction}, the outline of the paper is as follows. \hyperref[sec2]{Section 2} extends the definitions and results from \textcite{Dognini_2026a} based on the generalized \textcite{Cass_1972} criterion (\ref{eqCass2}). \hyperref[sec3]{Section 3} discusses the different Taylor rules. Finally, a brief
discussion of the results is presented in the \hyperref[sec4]{Conclusion} 
and all omitted proofs are stated in the \hyperref[appx]{Appendix}.


\section{Results on the existence of optimal monetary equilibria based on the generalized Cass criterion}\label{sec2}

This section extends the definitions and results in \textcite{Dognini_2026a} based on the generalized statement of the \textcite{Cass_1972} criterion (\ref{eqCass2})\footnote{See \textcite{Dognini_2026c} for a detailed discussion and proof of this form of the \textcite{Cass_1972} criterion.}.

The economy $\mathcal{E}$ is a consumption-loan overlapping generations one with discrete time periods $t\in\mathbb{N}_{0}$ and $L_{t}\in\mathbb{N}$ perishable commodities in each period. Households live for two periods, the one they are born into and the next, are indexed by $h\in \mathbb{N}$ and are gathered in generations $G_{t}=\{h\in\mathbb{N}\mid h \textrm{ is born in period }t\}$, $t\geq0$, according to their period of birth.

Household $h\in G_{t}$ is defined through a consumption set $X^{h}=\mathbb{R}^{L_{t}+L_{t+1}}_{+}$, a nonzero endowment $e^{h}=(e^{h}_{t},e^{h}_{t+1})\in\mathbb{R}^{L_{t}+L_{t+1}}_{+}$ and a continuous, non-decreasing, semi-strictly quasiconcave\footnote{Utility function $u(\cdot)$ is \textit{semi-strictly quasiconcave} (see \textcite[p. 87, Definition 2]{ArrowHahn_1971}), if $u(\tilde{c})\geq u(c)$ implies $u(\alpha\tilde{c}+(1-\alpha) c)\geq u(c)$, $0\leq \alpha \leq 1$, and if $u(\tilde{c})>u(c)$ implies $u(\alpha\tilde{c}+(1-\alpha) c)>u(c)$, $0< \alpha \leq1$. In particular, if $u(\cdot)$ is semi-strictly quasiconcave, then it is quasiconcave.} utility function $u^{h}:\mathbb{R}^{L_{t}+L_{t+1}}_{+}\rightarrow\mathbb{R}$ without local maxima. Also, $\sum_{t\geq0}\sum_{h\in G_{t}}e^{h}\in\mathbb{R}^{\infty}_{++}$ and $\alpha_{t}=\sum_{h\in G_{t+1}}\Vert e^{h}_{t+1}\Vert/\sum_{h\in G_{t}}\Vert e^{h}_{t+1}\Vert$\footnote{For $c\in\mathbb{R}^{L}$, $L\geq1$, $\Vert c\Vert=\sum^{L}_{i=1}\vert c_{i}\vert$.}, $t\geq0$, represents the intraperiod growth rate between young-age and old-age aggregate endowments faced by generation $G_{t}$ when old. 

I proceed with the following assumption.

\begin{assumption}[Bounded intraperiod growth rates]\label{assBoundsPopAndEndownments}
There are $0<\alpha_{\min}<\alpha_{\max}$ such that $\alpha_{t}\in(\alpha_{\min},\alpha_{\max})$, $t\geq0$.
\end{assumption}

Assumption \ref{assBoundsPopAndEndownments} requires the existence of strictly positive uniform bounds on the intraperiod growth rates of aggregate endowments. The equilibrium existence result of \textcite[p. 119, Theorem 5]{ArrowHahn_1971} leads to the next assumption.

\begin{assumption}\label{assResourceRelated}
For $j\geq 0$, all households $h\in \bigcup^{j}_{t=0}G_{t}$ are indirectly resource related \footnote{Following \textcite[117-118]{ArrowHahn_1971}, given $j\geq0$, household $h^{\prime}\in\bigcup^{j}_{t=0}G_{t}$ is \textit{resource related} to household $h^{\prime\prime}\in\bigcup^{j}_{t=0}G_{t}$ if, for every allocation $\{c^{h}\}_{h\in\bigcup^{j}_{t=0}G_{t}}$, $c^{h}\in\mathbb{R}^{L_{t}+L_{t+1}}_{+}$, $h\in G_{t}$, $0\leq t\leq  j$, with $\sum_{0\leq t\leq  j}\sum_{h\in G_{t}}c^{h}\leq \sum_{0\leq t\leq  j}\sum_{h\in G_{t}}e^{h}$, there exists an allocation $\{\tilde{c}^{\, h}\}_{h\in\bigcup^{j}_{t=0}G_{t}}$ and $y\in\mathbb{R}^{\sum^{j+1}_{i=0}L_{i}}_{+}$ such that $\sum_{0\leq t\leq  j}\sum_{h\in G_{t}}\tilde{c}^{\, h}\leq\sum_{0\leq t\leq  j}\sum_{h\in G_{t}}e^{h} +y$,
\begin{eqnarray*}
        u^{h}(\tilde{c}^{\, h})&\geq&u^{h}(c^{h}),\textrm{ for all }h\in\bigcup^{j}_{t=0}G_{t},\\
        u^{h^{\prime\prime}}(\tilde{c}^{\, h^{\prime\prime}})&>&u^{h^{\prime\prime}}(c^{h^{\prime\prime}}),
\end{eqnarray*}
and
\begin{eqnarray*}
    y_{ti}>0\textrm{ only if } e^{h^{\prime}}_{ti}>0,
\end{eqnarray*}
for $i\in\{1,\ldots, L_{t}\}$, $0\leq t\leq j+1$. Household $h^{\prime}$ is \textit{indirectly resource related} to household $h^{\prime\prime}$ if there is a sequence of households $\{h_{m}\}_{0\leq m\leq n}$, $n\geq1$, with $h_{0}=h^{\prime}$, $h_{n}=h^{\prime\prime}$ and $h_{m}$ resource related to $h_{m+1}$, $0\leq m\leq n-1$. 
}.
\end{assumption}

Assumption \ref{assBoundsPopAndEndownments} is a technical one that poses no significant constraint in the model. Household $h\in G_{t}$, $t\geq0$, has a Walrasian demand function $x^{h}:\mathbb{R}^{L_{t}+L_{t+1}}_{++}\rightarrow \mathbb{R}^{L_{t}+L_{t+1}}_{+}$, $x^{h}(p_{t},p_{t+1})=(x^{h}_{t}(p_{t},p_{t+1}),x^{h}_{t+1}(p_{t},p_{t+1}))$, $(p_{t},p_{t+1})\in\mathbb{R}^{L_{t}+L_{t+1}}_{++}$, given by 
\begin{eqnarray}\label{eqUMP}
    x^{h}(p_{t},p_{t+1})=\textrm{argmax}_{c\in\mathbb{R}^{L_{t}+L_{t+1}}_{+}}& u^{h} (c) \\
    \textrm{ s.t. }& (p_{t},p_{t+1})\cdot (c-e^{h})\leq0.\nonumber
\end{eqnarray}
The excess demand function in period $t\geq1$, $z_{t}:\mathbb{R}^{L_{t-1}+L_{t}+L_{t+1}}_{++}\rightarrow \mathbb{R}^{L_{t}}$, is 
\begin{eqnarray*}
    z_{t}(p_{t-1},p_{t},p_{t+1})=\sum_{h\in G_{t-1}}(x^{h}_{t}(p_{t-1},p_{t})-e^{h}_{t})+\sum_{h\in G_{t}}(x^{h}_{t}(p_{t},p_{t+1})-e^{h}_{t}),
\end{eqnarray*}
and joint excess demand until period $t\geq1$, $Z_{t}:\mathbb{R}^{\sum^{t+1}_{i=0}L_{i}}_{++}\rightarrow\mathbb{R}^{\sum^{t}_{i=1}L_{i}}$, is $Z_{t}(p_{0},\ldots,p_{t+1})=(z_{1}(p_{0},p_{1},p_{2}),\ldots,z_{t}(p_{t-1},p_{t},p_{t+1}))$. Before stating the third assumption, let $\beta=\max\, \{1+\alpha^{-1}_{\min},1+\alpha_{\max}\}$ and $\mathcal{B}_{t}(\sigma)=\{p\in\mathbb{R}^{L_{t}+L_{t+1}}_{++}\mid \sigma \leq p_{i}\leq \sigma^{-1}, 1\leq i \leq L_{t}+L_{t+1}\}$, for $\sigma\in(0,1)$, $t\geq 0$. It follows that the ``box-set'' $\mathcal{B}_{t}(\sigma)\subset \mathbb{R}^{L_{t}+L_{t+1}}_{++}$ is compact. 
\begin{assumption}[Unbounded aggregate demand at border prices]\label{assBoundsPrices}
For all $t\geq0$, there is $\sigma_{t}\in(0,1)$ such that if $(p_{t}/p_{t1},p_{t+1}/p_{t1})\notin \mathcal{B}_{t}(\sigma_{t})$, then $ \sum_{h\in G_{t}} \Vert x^{h}(p_{t},p_{t+1})\Vert> \beta \sum_{h\in G_{t}}\Vert e^{h}\Vert$, for $(p_{t},p_{t+1})\in \mathbb{R}^{L_{t}+L_{t+1}}_{++}$.
\end{assumption}

Assumption \ref{assBoundsPrices} states that if the prices faced by generation $G_{t}$ move sufficiently close to the border of the corresponding simplex, then its aggregate demand will extrapolate the existing resources on some of the commodities of periods $t$ or $t+1$, considering all possible intraperiod growth trends compatible with Assumption \ref{assBoundsPopAndEndownments}. It is also convenient to define the following compact sets $\mathcal{K}_{t}\subset\mathbb{R}^{L_{t}}_{++}$, $t\geq0$, and $\mathcal{K}\subset\mathbb{R}^{\infty}_{++}$,
\begin{eqnarray*}
    \mathcal{K}_{0}&=&\{p\in\mathbb{R}^{L_{0}}_{++}\mid \sigma_{0}\leq p_{i}\leq \sigma_{0}^{-1}, 1\leq i\leq L_{0}\}\\
    \mathcal{K}_{t}&=&\biggr\{p\in\mathbb{R}^{L_{t}}_{++}\mid \prod^{t-1}_{j=0}\sigma_{j}\leq p_{i}\leq \biggr(\prod^{t-1}_{j=0}\sigma_{j}\biggr)^{-1}, 1\leq i\leq L_{t}\biggr\},
\end{eqnarray*}
for $t\geq1$, and $\mathcal{K}=\prod^{+\infty}_{t=0}\mathcal{K}_{t}$. Throughout this paper, as a topological space, $\mathbb{R}^{\infty}$ is endowed with the product topology.

I proceed with the following definition.

\begin{definition}\label{defSetEquilibria}
The \textit{set of equilibria} $\mathcal{H}\subset\mathbb{R}^{\infty}_{++}$ of economy $\mathcal{E}$ is
\begin{eqnarray*}
\mathcal{H}=\{p\in\mathbb{R}^{\infty}_{++}\mid p_{01}=1,(p_{0},p_{1})\in\mathcal{B}_{0}(\sigma_{0})\textrm{ and } z_{t}(p_{t-1},p_{t},p_{t+1})=0, t\geq 1\}.
\end{eqnarray*}
\end{definition}

Definition \ref{defSetEquilibria} states that all equilibrium price sequences $p=(p_{0},p_{1},\ldots)\in\mathbb{R}^{\infty}_{++}$ are defined in terms of the first good in period $t=0$ and market clearing holds in all periods $t\geq1$. In period $t=0$, however, Definition \ref{defSetEquilibria} only requires the aggregate demand of generation $G_{0}$ to be compatible with Assumptions \ref{assBoundsPopAndEndownments} and \ref{assBoundsPrices}\footnote{See the discussion on \textcite[Section 3, pp. 3-5]{Dognini_2026a} on the differences between this definition of the set of equilibria and the one from classical models (e.g., \textcite{CassOkunoZilcha_1979}, \textcite{BalaskoShell_1980,BalaskoShell_1981_1}, \textcite{OkunoZilcha_1980}, \textcite{KehoeLevine_1985} and \textcite{RaoAiyagari_1992}). While the latter assumes the inception of the economy occurs in the first period of the model, the former assumes long-standing economic activity.}.

\begin{theorem}\label{theoEqSetCompact}
Under Assumptions \ref{assBoundsPopAndEndownments}--\ref{assBoundsPrices}, the set of equilibria $\mathcal{H}\subset\mathbb{R}^{\infty}_{++}$ is non-empty and compact. 
\end{theorem}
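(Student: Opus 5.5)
The plan is the standard truncation-and-limit argument: existence for finite horizons via the Arrow--Hahn theorem, a priori bounds placing all candidate prices in the compact set $\mathcal{K}$, and a diagonal/finite-intersection argument in the product topology. Compactness will follow along the way from closedness of $\mathcal{H}$ inside $\mathcal{K}$.

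\textbf{Step 1: a priori bounds.} First I show that every $p\in\mathcal{H}$ lies in $\mathcal{K}$, and more generally that any price sequence clearing markets in periods $1,\ldots,j$ satisfies the normalized bounds up to period $j$. The key claim is that market clearing in period $t+1$ forces $(p_{t}/p_{t1},p_{t+1}/p_{t1})\in\mathcal{B}_{t}(\sigma_{t})$.

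Suppose instead this pair lies outside the box. By Assumption \ref{assBoundsPrices}, $\sum_{h\in G_{t}}\Vert x^{h}\Vert>\beta\sum_{h\in G_{t}}\Vert e^{h}\Vert$. I then compare with the resources available to $G_{t}$ in periods $t$ and $t+1$. Clearing in periods $t$ and $t+1$ bounds $\sum_{G_t}x^h_t$ by $\sum_{G_{t-1}\cup G_t}e_t$ and $\sum_{G_t}x^h_{t+1}$ by $\sum_{G_t\cup G_{t+1}}e_{t+1}$. The choice $\beta=\max\{1+\alpha_{\min}^{-1},1+\alpha_{\max}\}$ together with Assumption \ref{assBoundsPopAndEndownments} should turn this into total demand $\le\beta\sum_{G_t}\Vert e^h\Vert$, a contradiction.

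The young-age period-$t$ comparison is the delicate part. It involves $\sum_{G_{t-1}}\Vert e_t\Vert$ versus $\sum_{G_t}\Vert e_t\Vert$, which is controlled by $\alpha_{t-1}^{-1}$. I will use $\alpha_{t-1}$ and $\alpha_t$ separately for the two periods, and handle $t=0$ through the condition $(p_0,p_1)\in\mathcal{B}_0(\sigma_0)$ imposed in Definition \ref{defSetEquilibria}.

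Chaining these box conditions with $p_{01}=1$ gives the product bounds $\prod_{j<t}\sigma_j\le p_{ti}\le(\prod_{j<t}\sigma_j)^{-1}$ inductively. Since I cannot get a single $\beta$ working for both periods simultaneously, I expect to need a componentwise version, exploiting that the demand blow-up must appear in some good. \emph{This step is where I expect the main obstacle:} matching $\beta$ exactly to the two-sided growth bounds.

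\textbf{Step 2: truncated economies.} For each $j\ge1$, consider the finite economy with generations $G_0,\ldots,G_j$ and goods in periods $0,\ldots,j+1$. By Assumption \ref{assResourceRelated} and the continuity, semi-strict quasiconcavity and nonsatiation of preferences, \textcite[Theorem 5]{ArrowHahn_1971} yields a competitive equilibrium. This is not yet a price $p\in\mathcal H$: in the truncation, $G_j$ is the last generation and period-$(j+1)$ markets involve only $G_j$, which differs from $z_{j+1}$. I therefore only retain clearing of $z_1,\ldots,z_j$; the period-$0$ and period-$(j+1)$ markets are used only through the box restriction.

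To guarantee $(p_0,p_1)\in\mathcal{B}_0(\sigma_0)$, I will apply the argument of Step 1 to $G_0$: period-$0$ and period-$1$ demands are bounded by resources. If needed, I will redefine the truncation so that $G_{-1}$'s endowment is held by $G_0$'s market in period $0$. Normalizing $p_{01}=1$, Step 1 places $(p^{(j)}_0,\ldots,p^{(j)}_{j})\in\prod_{t\le j}\mathcal K_t$ (and similarly for $p_{j+1}$). Extend each $p^{(j)}$ arbitrarily within $\mathcal{K}$.

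\textbf{Step 3: limit and compactness.} The set $\mathcal{K}$ is compact by Tychonoff, and metrizable, so a subsequence $p^{(j)}\to p\in\mathcal K$ in the product topology. For fixed $t$, $z_t$ clears at $p^{(j)}$ for all $j\ge t$. Demands are continuous on $\mathbb R_{++}$: they are single-valued by assumption, and continuity follows from Berge's theorem on compact price boxes. Hence $z_t(p_{t-1},p_t,p_{t+1})=0$. The conditions $p_{01}=1$ and $(p_0,p_1)\in\mathcal B_0(\sigma_0)$ also pass to the limit, so $p\in\mathcal H\neq\emptyset$.

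Finally, $\mathcal H\subset\mathcal K$ by Step 1, and $\mathcal H$ is closed, being defined by closed conditions on continuous maps whose arguments stay in $\mathcal K\subset\mathbb R^\infty_{++}$. A closed subset of a compact space is compact, so $\mathcal{H}$ is compact.
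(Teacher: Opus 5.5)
Your proposal is correct and is essentially the paper's proof: truncated economies solved by Arrow--Hahn, a priori box bounds from Assumption~\ref{assBoundsPrices} chained into $\mathcal{K}$, and a limit in the product topology. The only difference is packaging: the paper applies Cantor's intersection theorem to the nested compact sets $\mathcal{V}_{j}\times\prod_{t\geq j+2}\mathcal{K}_{t}$ of $j$-sighted equilibria, whereas you extract a convergent subsequence and use closedness of $\mathcal{H}$ in $\mathcal{K}$; the two are equivalent.

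The obstacle you anticipate in Step 1 does not exist. Clearing in period $t$ bounds young demand by $(1+\alpha_{t-1}^{-1})\sum_{h\in G_{t}}\Vert e^{h}_{t}\Vert$, and clearing in period $t+1$ bounds old demand by $(1+\alpha_{t})\sum_{h\in G_{t}}\Vert e^{h}_{t+1}\Vert$. Both factors are below $\beta$ by its definition as a maximum. Since the $\ell_1$ norm splits across the two periods, adding gives $\sum_{h\in G_{t}}\Vert x^{h}\Vert\leq\beta\sum_{h\in G_{t}}\Vert e^{h}\Vert$ directly, so no componentwise argument is needed.
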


We are particularly interested in equilibrium prices $p\in\mathcal{H}$ whose allocation cannot be Pareto dominated, meaning that there is no possible utility-enhancing redistribution of resources. This leads to the following definition\footnote{To ease notation, I consider vectors in $\mathbb{R}^{L_{t}+L_{t+1}}$, $t\geq0$, as elements of $\mathbb{R}^{\infty}$, with the appropriate embedding.}.
\begin{definition}\label{defParetoOptimalEqSet}
The \textit{subset of Pareto optimal equilibria} $\mathcal{H}^{PO}\subseteq\mathcal{H}$ is
\begin{eqnarray*}
    \mathcal{H}^{PO}=\{p\in\mathcal{H}\mid \nexists \{y^{h}\}_{h\geq1}, y^{h}\in\mathbb{R}^{L_{t}+L_{t+1}}_{+}, h\in G_{t},t\geq0, \textrm{ s.t. } \sum_{t\geq0}\sum_{h\in G_{t}}(y^{h}-x^{h}(p_{t},p_{t+1}))=0\\
    \textrm{and }u^{h}(y^{h})\geq u^{h}(x^{h}(p_{t},p_{t+1})), h\in G_{t}, t\geq0, \textrm{ with at least one strict inequality}\}.
\end{eqnarray*}
\end{definition}

The next assumption requires the subset of Pareto optimal equilibria $\mathcal{H}^{PO}\subseteq\mathcal{H}$ to be fully characterized by the generalized \textcite{Cass_1972} criterion (\ref{eqCass2})\footnote{All the existence results in this section are, in fact, results on the existence of equilibrium price sequences that satisfy the \textcite{Cass_1972} criterion (\ref{eqCass2}) (see the proof of Theorem \ref{theoClassModelExistence}, where this point is raised). Therefore, Assumption \ref{assCassCriterion} could require only the sufficiency of the Cass criterion for optimality (e.g., when the conditions of Theorem 3A from \textcite[p. 803]{OkunoZilcha_1980} are satisfied), and this would still ensure the existence of \textit{optimal monetary equilibria}. However, in this case, the set of equilibria that satisfy (\ref{eqCass2}) would be a subset of $\mathcal{H}^{PO}$. Furthermore, if this sort of assumption is fully dropped, then the results in Section \ref{sec3} still furnish the existence of \textit{monetary equilibria}.}.

\begin{assumption}[Generalized Cass criterion]\label{assCassCriterion}
    Let $p\in\mathcal{H}$. Then, $p\in\mathcal{H}^{PO}$ if, and only if, 
    \begin{eqnarray*}
         \sum^{+\infty}_{t=0}\frac{1}{\Vert p_{t}\Vert \sum_{h\in G_{t}}\Vert e^{h}_{t}\Vert}=+\infty.
    \end{eqnarray*}
\end{assumption}

I proceed with the following definition from \textcite{Dognini_2026a}.

\begin{definition}\label{defRealSavings}
The \textit{real savings} $s^{h}:\mathbb{R}^{L_{t}+L_{t+1}}_{++}\rightarrow \mathbb{R}$ of household $h\in G_{t}$, $t\geq0$, is
    \begin{eqnarray*}
       s^{h}(p_{t},p_{t+1}) = \frac{p_{t}}{\Vert p_{t}\Vert}\cdot (e^{h}_{t}-x^{h}_{t}(p_{t},p_{t+1})).
    \end{eqnarray*}
\end{definition}

Definition \ref{defRealSavings} states that the ratio between first-period nominal savings and the cost of the reference basket $(1,\ldots,1)\in\mathbb{R}^{L_{t}}$ (i.e. $p_{t}\cdot(1,\ldots,1)=\Vert p_{t}\Vert$) is a real measure of savings of household $h\in G_{t}$, $t\geq0$. The following lemma reveals a first relation between aggregate real savings and the subset of Pareto optimal equilbria $\mathcal{H}^{PO}$.

\begin{lemma}\label{lemmaFirstParetoOptEquilibria}
Under Assumptions \ref{assBoundsPopAndEndownments}, \ref{assBoundsPrices} and \ref{assCassCriterion}, if $p\in\mathcal{H}/\mathcal{H}^{PO}$, then 
\begin{eqnarray*}
    \lim_{t\rightarrow\infty}\frac{\sum_{h\in G_{t}}s^{h}(p_{t},p_{t+1})}{\sum_{h\in G_{t}} \Vert e^{h}_{t}\Vert}=0.
\end{eqnarray*}
\end{lemma}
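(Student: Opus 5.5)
The plan is to show that aggregate \emph{nominal} savings of each generation are constant along any equilibrium path. Then the relative aggregate real savings equal that constant times the $t$-th term of the generalized Cass series in Assumption \ref{assCassCriterion}. Since $p$ is not Pareto optimal, that series converges, so its terms, and hence the ratio in the statement, go to zero.

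\emph{Step 1 (budget identity).} Fix $p\in\mathcal{H}$ and $h\in G_{t}$. Since $u^{h}$ is continuous, non-decreasing and has no local maxima, the budget constraint in (\ref{eqUMP}) binds at $x^{h}(p_{t},p_{t+1})$. Hence
\begin{eqnarray*}
p_{t}\cdot(e^{h}_{t}-x^{h}_{t}(p_{t},p_{t+1}))=p_{t+1}\cdot(x^{h}_{t+1}(p_{t},p_{t+1})-e^{h}_{t+1}).
\end{eqnarray*}
Define the aggregate nominal savings of generation $G_{t}$ by $M_{t}=\sum_{h\in G_{t}}p_{t}\cdot(e^{h}_{t}-x^{h}_{t}(p_{t},p_{t+1}))$. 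Summing the identity over $G_{t}$ shows that $M_{t}$ equals the aggregate nominal dissaving of $G_{t}$ when old. Both sums are finite because Assumption \ref{assBoundsPrices} is stated for finite aggregates of each generation, and $\sum_{h\in G_t}e^h$ is finite.

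\emph{Step 2 (constancy).} Market clearing $z_{t+1}(p_{t},p_{t+1},p_{t+2})=0$ for $t\geq0$ says the excess demand of the old $G_{t}$ in period $t+1$ cancels that of the young $G_{t+1}$. Taking the inner product with $p_{t+1}$ gives
\begin{eqnarray*}
p_{t+1}\cdot\sum_{h\in G_{t}}(x^{h}_{t+1}-e^{h}_{t+1})=p_{t+1}\cdot\sum_{h\in G_{t+1}}(e^{h}_{t+1}-x^{h}_{t+1}),
\end{eqnarray*}
that is, $M_{t}=M_{t+1}$. So $M_{t}\equiv M\in\mathbb{R}$ for all $t\geq0$; this $M$ is the nominal stock of fiat money. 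By Definition \ref{defRealSavings}, for every $t\geq0$,
\begin{eqnarray*}
\frac{\sum_{h\in G_{t}}s^{h}(p_{t},p_{t+1})}{\sum_{h\in G_{t}}\Vert e^{h}_{t}\Vert}=\frac{M}{\Vert p_{t}\Vert\sum_{h\in G_{t}}\Vert e^{h}_{t}\Vert}.
\end{eqnarray*}

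\emph{Step 3 (conclusion).} If $p\in\mathcal{H}/\mathcal{H}^{PO}$, Assumption \ref{assCassCriterion} gives $\sum_{t\geq0}(\Vert p_{t}\Vert\sum_{h\in G_{t}}\Vert e^{h}_{t}\Vert)^{-1}<+\infty$. The terms of a convergent series of positive numbers tend to zero, so multiplying by the fixed constant $M$ yields the limit $0$; this covers $M=0$ trivially. Note that $\sum_{h\in G_{t}}\Vert e^{h}_{t}\Vert>0$, since $\sum_{t\geq0}\sum_{h\in G_{t}}e^{h}\in\mathbb{R}^{\infty}_{++}$ and the old endowments of period $0$ belong to a generation outside this sum, so the denominators are well defined.

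The main obstacle is Step 1: justifying that the budget binds and that the generation-wide sums are finite and well defined for possibly infinite $G_{t}$. Assumption \ref{assBoundsPrices}, which bounds aggregate demand, and Assumption \ref{assBoundsPopAndEndownments}, which keeps aggregate endowments finite and comparable across periods, are where I would look for this. Everything after Step 1 is a direct telescoping argument.
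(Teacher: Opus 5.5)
Your proposal is correct and is essentially the paper's proof. Binding budgets (Walras' law) plus market clearing in period $t+1$ make aggregate nominal savings the same constant for every generation, so the relative aggregate real savings equal that constant times the $t$-th term of the series in Assumption \ref{assCassCriterion}, which tends to zero when $p\notin\mathcal{H}^{PO}$; one small correction to your side remark is that for $t\geq1$ positivity of $\sum_{h\in G_{t}}\Vert e^{h}_{t}\Vert$ comes from $\alpha_{t-1}>\alpha_{\min}>0$ in Assumption \ref{assBoundsPopAndEndownments}, not from strict positivity of aggregate endowments, since period-$t$ goods could be held entirely by the old generation $G_{t-1}$.
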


Lemma \ref{lemmaFirstParetoOptEquilibria} states that equilibrium prices that are not in $\mathcal{H}^{PO}$ (i.e., inefficient ones) make the ratio of aggregate real savings to aggregate first-period endowment approach zero. In this scenario, the fraction of the aggregate endowment that is being saved by the younger generations (and, thus, being transferred as ``excess consumption'' to the older generations) becomes negligible over time, and autarky or intraperiod trade is all that remains in the long run. In order to have the converse of Lemma \ref{lemmaFirstParetoOptEquilibria}, I proceed with the following definition.

\begin{definition}\label{defProneSavings}
The economy $\mathcal{E}$ is \textit{prone to savings} if there are $\varepsilon>0$, $\delta>0$ such that, for $t\geq0$, $(p_{t},p_{t+1})\in\mathbb{R}^{L_{t}+L_{t+1}}_{++}$ and $(p_{t}/p_{t1},p_{t+1}/p_{t1})\in\mathcal{B}_{t}(\sigma_{t})$,
\begin{eqnarray*}
    \frac{\sum_{h\in G_{t}} s^{h}(p_{t},p_{t+1})}{\sum_{h\in G_{t}}\Vert e^{h}_{t}\Vert}\leq \delta\implies \frac{\Vert p_{t} \Vert}{\Vert p_{t+1} \Vert} \leq \frac{1}{1+\varepsilon}\frac{\sum_{h\in G_{t+1}}\Vert e^{h}_{t+1}\Vert}{\sum_{h\in G_{t}}\Vert e^{h}_{t}\Vert}.
\end{eqnarray*}
\end{definition}

Definition \ref{defProneSavings} states the fundamental condition to be imposed on overlapping generations economies to reach our envisioned existence results. It says that, in prone-to-savings economies, low aggregate real savings $\sum_{h\in G_{t}} s^{h}(p_{t},p_{t+1})$ of generation $G_{t}$, $t\geq0$, relative to first-period aggregate resources $\sum_{h\in G_{t}}\Vert e^{h}_{t}\Vert$ can only be achieved by a small real return rate $\Vert p_{t}\Vert/\Vert p_{t+1}\Vert$ relative to the first-period endowment growth rates $\sum_{h\in G_{t+1}}\Vert e^{h}_{t+1}\Vert/\sum_{h\in G_{t}}\Vert e^{h}_{t}\Vert$\footnote{The definition of prone-to-savings economies in \textcite[Definition 5, p. 6]{Dognini_2026a} adopted \textit{per capita} real savings instead of \textit{relative aggregate} real savings and compared the real return rates faced by each generation with the \textit{demographic} growth rate instead of the \textit{first-period endowment} growth rate.}. Stated differently, the level of aggregate real savings becomes an informative signal on the real return rate faced by each generation.

I proceed with the following assumption.

\begin{assumption}\label{assProneSavingsEconomy}
The economy $\mathcal{E}$ is prone to savings.
\end{assumption}

The lemma below reveals that in prone-to-savings economies, efficient equilibria must sustain, in all periods, the relative level of aggregate real savings above the threshold given by Definition \ref{defProneSavings}.

\begin{lemma}\label{lemmaSecondParetoOptEquilibria}
Under Assumptions \ref{assBoundsPopAndEndownments} and \ref{assBoundsPrices}--\ref{assProneSavingsEconomy}, if $p\in\mathcal{H}$ and $\sum_{h\in G_{t}}s^{h}(p_{t},p_{t+1})/\sum_{h\in G_{t}}\Vert e^{h}_{t}\Vert\leq \delta$, for some $t\geq0$,\footnote{To be fully clear, $\delta>0$ is given by Assumption \ref{assProneSavingsEconomy} and Definition \ref{defProneSavings}.} then $p\notin\mathcal{H}^{PO}$.
\end{lemma}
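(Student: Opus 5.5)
The plan is to show that the hypothesis at a single date $t$ propagates to every later date. The prone-to-savings implication then makes $\Vert p_{s}\Vert\sum_{h\in G_{s}}\Vert e^{h}_{s}\Vert$ grow geometrically, so the series in Assumption \ref{assCassCriterion} converges and $p\notin\mathcal{H}^{PO}$. Throughout, write $E_{s}=\sum_{h\in G_{s}}\Vert e^{h}_{s}\Vert$ and $S_{s}=\sum_{h\in G_{s}}s^{h}(p_{s},p_{s+1})$.

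\textbf{Step 1: constant nominal savings.} Since utilities have no local maxima, budgets bind: for $h\in G_{s-1}$ we have $p_{s-1}\cdot(e^{h}_{s-1}-x^{h}_{s-1})=p_{s}\cdot(x^{h}_{s}-e^{h}_{s})$. Take the inner product of $z_{s}=0$ with $p_{s}$ and sum over households. This gives
\begin{eqnarray*}
\sum_{h\in G_{s-1}}p_{s-1}\cdot(e^{h}_{s-1}-x^{h}_{s-1})=\sum_{h\in G_{s}}p_{s}\cdot(e^{h}_{s}-x^{h}_{s}),\qquad s\geq1 .
\end{eqnarray*}
Hence aggregate nominal savings equal a constant $m$, and $S_{s}=m/\Vert p_{s}\Vert$ for all $s\geq0$. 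Consequently
\begin{eqnarray*}
\frac{S_{s+1}}{E_{s+1}}=\frac{S_{s}}{E_{s}}\cdot\frac{\Vert p_{s}\Vert E_{s}}{\Vert p_{s+1}\Vert E_{s+1}}.
\end{eqnarray*}

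\textbf{Step 2: equilibrium prices lie in the box.} To apply Definition \ref{defProneSavings} at every date, I must check that $(p_{s}/p_{s1},p_{s+1}/p_{s1})\in\mathcal{B}_{s}(\sigma_{s})$ for every $s$. In equilibrium, market clearing bounds the demand of $G_{s}$ by the total resources of periods $s$ and $s+1$. Using Assumption \ref{assBoundsPopAndEndownments} and the definition of $\beta$, those resources should be at most $\beta\sum_{h\in G_{s}}\Vert e^{h}\Vert$. Assumption \ref{assBoundsPrices} then forces prices into the box. This is the same argument that underlies Theorem \ref{theoEqSetCompact} and the compact set $\mathcal{K}$, so I would cite it from there.

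I expect this step to be the main obstacle. The delicate part is bounding the endowment of $G_{s-1}$ in period $s$ and of $G_{s+1}$ in period $s+1$ in terms of $G_{s}$'s endowments via the $\alpha$'s. If it fails in general, I would fall back on the box membership that the proof of Theorem \ref{theoEqSetCompact} delivers for equilibria.

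\textbf{Step 3: induction and summation.} Suppose $S_{s}/E_{s}\leq\delta$. Definition \ref{defProneSavings} gives $\Vert p_{s+1}\Vert E_{s+1}\geq(1+\varepsilon)\Vert p_{s}\Vert E_{s}$. By Step 1, if $m>0$ then $S_{s+1}/E_{s+1}\leq\delta/(1+\varepsilon)\leq\delta$. If $m\leq0$ the ratio is nonpositive at every date, so it is trivially at most $\delta$. By induction from the given $t$, the ratio is at most $\delta$ for all $s\geq t$. Applying the growth inequality repeatedly yields
\begin{eqnarray*}
\frac{1}{\Vert p_{s}\Vert E_{s}}\leq\frac{(1+\varepsilon)^{-(s-t)}}{\Vert p_{t}\Vert E_{t}},\qquad s\geq t .
\end{eqnarray*}
The tail of the Cass series is therefore dominated by a convergent geometric series. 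The finitely many earlier terms are finite because $p\in\mathbb{R}^{\infty}_{++}$, and $E_{s}>0$ follows from the positivity of aggregate endowments. So $\sum_{s}(\Vert p_{s}\Vert E_{s})^{-1}<\infty$, and Assumption \ref{assCassCriterion} gives $p\notin\mathcal{H}^{PO}$.
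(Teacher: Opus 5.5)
Your proposal is correct and follows essentially the same route as the paper: box membership at each date (from the definition of $\mathcal{H}$ at $t=0$, and from market clearing plus Assumption \ref{assBoundsPrices} for $t\geq1$), the identity $S_{s+1}/E_{s+1}=(S_{s}/E_{s})\,\Vert p_{s}\Vert E_{s}/(\Vert p_{s+1}\Vert E_{s+1})$ obtained from market clearing and Walras' law, induction from the given $t$, and a geometric bound on the tail of the Cass series. The step you flag as the main obstacle is routine, and it is exactly how the paper argues: $\sum_{h\in G_{s-1}}\Vert e^{h}_{s}\Vert=\alpha_{s-1}^{-1}E_{s}$, $\sum_{h\in G_{s+1}}\Vert e^{h}_{s+1}\Vert=\alpha_{s}\sum_{h\in G_{s}}\Vert e^{h}_{s+1}\Vert$, and $\beta-1\geq\max\{\alpha_{\min}^{-1},\alpha_{\max}\}$; your explicit handling of the case $m\leq0$ covers a sign issue the paper leaves implicit.
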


The next result gathers Lemma \ref{lemmaFirstParetoOptEquilibria} and Lemma \ref{lemmaSecondParetoOptEquilibria} to provide, through the dynamic of relative aggregate real savings, a complete characterization of the subset of Pareto optimal equilibria in prone-to-savings economies.

\begin{proposition}\label{propNecAndSuffConditionParetoOpt}
Let $p\in\mathcal{H}$. Under Assumptions \ref{assBoundsPopAndEndownments} and \ref{assBoundsPrices}--\ref{assProneSavingsEconomy}, $p\notin\mathcal{H}^{PO}$ if, and only if, 
\begin{eqnarray*}
    \lim_{t\rightarrow\infty}\frac{\sum_{h\in G_{t}}s^{h}(p_{t},p_{t+1})}{\sum_{h\in G_{t}} \Vert e^{h}_{t}\Vert}=0.
\end{eqnarray*}
\end{proposition}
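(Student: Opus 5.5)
The proposition has two directions, and the plan is to get each one from the two lemmas that precede it. Write $S_{t}=\sum_{h\in G_{t}}s^{h}(p_{t},p_{t+1})/\sum_{h\in G_{t}}\Vert e^{h}_{t}\Vert$ for the relative aggregate real savings of generation $G_{t}$.

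\emph{Necessity.} Suppose $p\in\mathcal{H}$ and $p\notin\mathcal{H}^{PO}$. Lemma \ref{lemmaFirstParetoOptEquilibria} requires only Assumptions \ref{assBoundsPopAndEndownments}, \ref{assBoundsPrices} and \ref{assCassCriterion}. All of these are among the hypotheses of the proposition, so the lemma gives $\lim_{t\rightarrow\infty}S_{t}=0$ directly. Nothing more is needed for this direction.

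\emph{Sufficiency.} Suppose $p\in\mathcal{H}$ and $\lim_{t\rightarrow\infty}S_{t}=0$. Let $\delta>0$ be the constant from Definition \ref{defProneSavings}, which is available by Assumption \ref{assProneSavingsEconomy}. By the definition of a limit there is some $T\geq0$ with $S_{T}<\delta$, so in particular $S_{T}\leq\delta$. Lemma \ref{lemmaSecondParetoOptEquilibria} needs only a single period in which relative aggregate real savings fall to or below $\delta$, and it then concludes $p\notin\mathcal{H}^{PO}$. Its hypotheses, Assumptions \ref{assBoundsPopAndEndownments} and \ref{assBoundsPrices}--\ref{assProneSavingsEconomy}, are exactly those of the proposition. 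So this direction is also immediate.

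\emph{Where care is needed.} There is essentially no obstacle beyond checking that the hypotheses line up. The one point to verify is that Assumption \ref{assResourceRelated}, which is not listed in the proposition, is not used by either lemma. Here one relies on the lemmas as stated: they invoke only membership in $\mathcal{H}$, and Definition \ref{defSetEquilibria} does not depend on existence. Assumption \ref{assResourceRelated} matters only for the non-emptiness part of Theorem \ref{theoEqSetCompact}, and the proposition does not use that.

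If one wanted the argument to stand on its own rather than cite Lemma \ref{lemmaSecondParetoOptEquilibria}, the hard step would be inside that lemma. Market clearing at every date should make $S_{t}\leq\delta$ at one date propagate to every later date. Definition \ref{defProneSavings} then gives $\Vert p_{t+1}\Vert\sum_{G_{t+1}}\Vert e_{t+1}\Vert\geq(1+\varepsilon)\Vert p_{t}\Vert\sum_{G_{t}}\Vert e_{t}\Vert$ from that date on. The series in Assumption \ref{assCassCriterion} is therefore dominated by a geometric series and converges, which shows $p\notin\mathcal{H}^{PO}$. For the present proposition, however, citing the two lemmas is enough.
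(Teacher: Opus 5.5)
Your proposal is correct and follows the paper's proof exactly. Necessity is Lemma~\ref{lemmaFirstParetoOptEquilibria} applied directly, and sufficiency picks $T$ with $\sum_{h\in G_{T}}s^{h}(p_{T},p_{T+1})/\sum_{h\in G_{T}}\Vert e^{h}_{T}\Vert\leq\delta$ from the vanishing limit, then invokes Lemma~\ref{lemmaSecondParetoOptEquilibria}; your check that Assumption~\ref{assResourceRelated} is not needed matches how the paper states both lemmas.
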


Proposition \ref{propNecAndSuffConditionParetoOpt} reveals that inefficient equilibria are precisely those that depress the level of relative aggregate real savings. Therefore, it can be seen as an equivalent statement of the \textcite{Cass_1972} criterion that, instead of looking at the behavior of the series \ref{eqCass2}, deals only with the limit of its terms.

I proceed with the following assumption.

\begin{assumption}\label{assUniformBoundsDeltaAndL}
Let $L_{t}\in\mathbb{N},\sigma_{t}\in (0,1)$, $t\geq0$, be as defined above. Then, $\inf_{t\geq0}\sigma_{t}>0$ and $\sup_{t\geq0}L_{t}<\infty$
\end{assumption}

Assumption \ref{assUniformBoundsDeltaAndL} is a technical one that poses no significant constraint in the model. The next theorem is our first existence result.

\begin{theorem}\label{theoExistenceOfParetoOptJME}
Under Assumptions \ref{assBoundsPopAndEndownments}--\ref{assUniformBoundsDeltaAndL}, the subset of Pareto optimal equilibria $\mathcal{H}^{PO}\subseteq\mathcal{H}$ is non-empty and compact, with 
\begin{eqnarray}\label{eqMonetary}
    \frac{\sum_{h\in G_{0}} s^{h}(p_{0},p_{1})}{\sum_{h\in G_{0}}\Vert e^{h}_{0}\Vert}>\delta,
\end{eqnarray}
for $p\in\mathcal{H}^{PO}$.
\end{theorem}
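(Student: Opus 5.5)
Theorem \ref{theoExistenceOfParetoOptJME} splits into three claims: the bound (\ref{eqMonetary}) on $\mathcal{H}^{PO}$, compactness of $\mathcal{H}^{PO}$, and non-emptiness.

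\textbf{The bound (\ref{eqMonetary}) and compactness.} Inequality (\ref{eqMonetary}) is immediate from Lemma \ref{lemmaSecondParetoOptEquilibria} with $t=0$: if the relative savings of $G_0$ were $\le\delta$, then $p\notin\mathcal{H}^{PO}$. The same lemma, applied for every $t\geq0$, gives
\begin{eqnarray*}
\mathcal{H}^{PO}=\Bigl\{p\in\mathcal{H}\mid \frac{\sum_{h\in G_{t}}s^{h}(p_{t},p_{t+1})}{\sum_{h\in G_{t}}\Vert e^{h}_{t}\Vert}>\delta,\ t\geq0\Bigr\}.
\end{eqnarray*}
The inclusion $\subseteq$ is exactly Lemma \ref{lemmaSecondParetoOptEquilibria}. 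For $\supseteq$, suppose $p$ is in the right-hand set. Its relative savings do not converge to $0$, so Proposition \ref{propNecAndSuffConditionParetoOpt} gives $p\in\mathcal{H}^{PO}$. Thus $\mathcal{H}^{PO}$ is the set of $p\in\mathcal{H}$ with relative savings $>\delta$ in every period.

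This set is also the set of $p\in\mathcal{H}$ with relative savings $\geq\delta$ in every period. Indeed, a $p$ with savings equal to $\delta$ at some $t$ is excluded by Lemma \ref{lemmaSecondParetoOptEquilibria}, since that lemma is stated with $\le\delta$. Each constraint ``$\geq\delta$'' depends on finitely many coordinates and is closed in the product topology, because each $x^h$ is continuous. Continuity of $x^h$ follows from the maximum theorem, with strict quasiconcavity handled as in the earlier paper. Hence $\mathcal{H}^{PO}$ is a closed subset of the compact set $\mathcal{H}$ from Theorem \ref{theoEqSetCompact}, and is therefore compact.

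\textbf{Non-emptiness.} This is the main step. I would use a truncation and limit argument. For each $T$, consider the finite economy made of generations $G_0,\dots,G_T$. Add the constraint that the relative aggregate real savings of each generation be at least $\delta$, or more precisely choose a money stock (nominal transfer to $G_0$ at date $0$) large enough that these savings are positive. Then find a price vector $p^{T}$ clearing markets in periods $1,\dots,T$, with normalized prices in the boxes $\mathcal{K}_t$. Assumption \ref{assBoundsPrices} confines normalized prices to $\mathcal{B}_t(\sigma_t)$, and Assumption \ref{assUniformBoundsDeltaAndL} makes these bounds uniform in $t$.

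The key is to show that along the truncated equilibria, relative savings stay above $\delta$. I would argue by contradiction. If savings fall to $\le\delta$ at some date, Definition \ref{defProneSavings} forces
\begin{eqnarray*}
\Vert p_{t+1}\Vert\sum_{h\in G_{t+1}}\Vert e^{h}_{t+1}\Vert\geq(1+\varepsilon)\,\Vert p_t\Vert\sum_{h\in G_t}\Vert e^h_t\Vert .
\end{eqnarray*}
So nominal first-period wealth grows geometrically. Market clearing says the old generation's dissaving equals the young generation's saving in nominal terms, and this nominal value is the fixed money stock. Hence relative real savings must shrink geometrically, and the savings stay $\le\delta$ thereafter. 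With money $M>0$ fixed and wealth exploding, real savings go to $0$. I would then contradict this through the behaviour of the equilibrium near the end of the truncation, or through the choice of $M$ as a fixed point. Concretely, I would pick $M$ by a continuity or intermediate-value argument in the savings of $G_0$ so that the threshold is met; Assumption \ref{assBoundsPopAndEndownments} bounds the growth ratios so that this choice is uniform in $T$.

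Finally, compactness of $\mathcal{K}$ (Tychonoff) gives a convergent subsequence $p^{T}\to p$. Continuity of the $z_t$ gives $p\in\mathcal{H}$, and the closed constraints ``$\ge\delta$'' pass to the limit. As noted above, savings $\geq\delta$ in every period already place $p$ in $\mathcal{H}^{PO}$, so $\mathcal{H}^{PO}\neq\emptyset$.

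The main obstacle is guaranteeing the uniform lower bound $\ge\delta$ along the truncations, that is, that money retains value in the limit. The prone-to-savings implication combined with the generalized Cass series (\ref{eqCass2}) should be what rules out the geometric-explosion alternative.
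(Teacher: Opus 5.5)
Your handling of (\ref{eqMonetary}) and of compactness is correct and essentially the paper's argument. Describing $\mathcal{H}^{PO}$ as the set of $p\in\mathcal{H}$ whose relative savings are $\geq\delta$ at every date is a tidy way to get closedness. The gap is in non-emptiness, exactly at the step you flag. You never specify a truncated economy whose equilibria are forced to carry large real savings at the terminal date. Without that, your forward propagation (savings $\le\delta$ at one date stay $\le\delta$ afterwards) yields no contradiction.

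Consider first closing the truncation $G_0,\dots,G_T$ by clearing markets at dates $0$ and $T+1$, as in the Arrow--Hahn step of the proof of Theorem \ref{theoEqSetCompact}. Then the invariance of $\sum_{h\in G_t}p_t\cdot(e^h_t-x^h_t)$ in $t$ (proof of Lemma \ref{lemmaFirstParetoOptEquilibria}) forces that nominal constant to be $0$. Every generation then has relative savings $0\le\delta$, which is exactly the non-monetary, inefficient outcome. If instead date $T+1$ is left open (the sets $\mathcal{V}_j$), nothing pins down the savings level.

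Your proposed fix is an exogenous money stock chosen by an intermediate-value argument, uniformly in $T$ via Assumption \ref{assBoundsPopAndEndownments}. This is not an argument, for three reasons:
\begin{itemize}
\item $\mathcal{E}$ has no money; the ``money stock'' is the endogenous constant above.
\item You give no existence result for truncated equilibria with a prescribed stock.
\item In any finite truncation, money is worthless at $T+1$ unless someone sells goods to $G_T$ there.
\end{itemize}
Nor can the Cass series rule out the explosive alternative. Such equilibria genuinely exist (they form $\mathcal{H}\setminus\mathcal{H}^{PO}$); Assumption \ref{assCassCriterion} classifies equilibria but does not produce an efficient one.

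The missing idea is the paper's closure of each truncation $\mathcal{F}_k$ by a ``time traveler''. This household owns $\sum_{h\in G_{k+1}}e^h_{k+1}$ at date $k+1$ and values only date-$0$ consumption, which ties the two ends into a loop. Arrow--Hahn, with Assumptions \ref{assResourceRelated} and \ref{assBoundsPrices}, gives an equilibrium $p^k$ with $p^k_t\in\mathcal{K}_t$ and $(p^k_k/p^k_{k1},p^k_{k+1}/p^k_{k1})\in\mathcal{B}_k(\sigma_k)$. Market clearing at $k+1$ then makes the nominal savings of $G_k$ equal to $p^k_{k+1}\cdot\sum_{h\in G_{k+1}}e^h_{k+1}$. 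If the relative savings of $G_k$ were $\le\delta$, Definition \ref{defProneSavings} would give
\[
\frac{p^k_{k+1}}{\Vert p^k_{k+1}\Vert}\cdot\frac{\sum_{h\in G_{k+1}}e^h_{k+1}}{\sum_{h\in G_{k+1}}\Vert e^h_{k+1}\Vert}<\delta .
\]
The box together with Assumption \ref{assUniformBoundsDeltaAndL} bounds the left-hand side below by $\sigma^2/L$, with $\sigma=\inf_t\sigma_t$ and $L=\sup_tL_t$. Shrinking $\delta$ below $\sigma^2/L$, which preserves Definition \ref{defProneSavings}, gives the contradiction. This is where Assumption \ref{assUniformBoundsDeltaAndL} is genuinely used, and your sketch has no counterpart to it.

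Once you have this terminal bound, your own forward propagation shows that every $p^k$ has relative savings $>\delta$ at all dates $t\le k$. Your limit argument then completes the proof: Tychonoff on $\mathcal{K}$, continuity of $z_t$, the closed constraints $\geq\delta$, and Proposition \ref{propNecAndSuffConditionParetoOpt}.
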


Theorem \ref{theoExistenceOfParetoOptJME} provides sufficient conditions for $\mathcal{H}^{PO}$ to be non-empty and compact, the most relevant being that the economy is prone to savings. Furthermore, (\ref{eqMonetary}) implies that all the efficient equilibria in $\mathcal{H}^{PO}$ are, in a given sense, monetary\footnote{See a detailed discussion on this \textit{monetary} nature after Theorem 8 in \textcite[p. 7]{Dognini_2026a}.}.

The last result of this section provides conditions that ensure the existence of efficient monetary equilibria in classical overlapping generations economy with money (e.g., \textcite{CassOkunoZilcha_1979}, \textcite{BalaskoShell_1980,BalaskoShell_1981_1}, \textcite{OkunoZilcha_1980}, \textcite{KehoeLevine_1985} and \textcite{RaoAiyagari_1992}).

Let $\mathcal{E}^{2}$ be such an economy\footnote{I adopt the same notation as in \textcite[pp. 3-5]{Dognini_2026a}, where three different types of classical overlapping generations economies are detailed, and the ones with fiat money are of the second type (thus the superscript ``2'' in $\mathcal{E}^{2}$).}. The unique structural difference between $\mathcal{E}^{2}$ and $\mathcal{E}$ is that the younger period of life of the generation born in $t=0$ is not modeled. Therefore, household $h\in G^{2}_{0}$ is characterized by a continuous, non-decreasing, semi-strictly quasi-concave utility function $u^{h}:\mathbb{R}^{L_{1}}_{+}\rightarrow\mathbb{R}$ without local maxima, a nonzero endowment $e^{h}_{1}\in\mathbb{R}^{L_{1}}_{+}$ and an amount of fiat money $m^{h}\geq0$. I assume, without loss of generality, $\sum_{h\in G^{2}_{0}}m^{h}=1$. 

Also, the aggregate endowment satisfies $\sum_{t\geq0}\sum_{h\in G^{2}_{t}}e^{h}\in\mathbb{R}^{\infty}_{++}$. Furthermore, let $\mathcal{E}^{2}_{\geq1}$ be the economy formed by all generations of $\mathcal{E}^{2}$ born in $t\geq1$. Finally, the Walrasian demand of household $h\in G^{2}_{0}$, $x^{h}_{1}:\mathbb{R}_{+}\times\mathbb{R}^{L_{1}}_{++}\rightarrow\mathbb{R}^{L_{1}}_{+}$,  is now given by
\begin{eqnarray}\label{eqUMPG0}
    x^{h}_{1}(p_{m},p_{1})=\textrm{argmax}_{c\in\mathbb{R}^{L_{1}}_{+}}& u^{h} (c) \\
    \textrm{ s.t. }& p_{1}\cdot (c-e^{h}_{1})\leq p_{m}m^{h},\nonumber
\end{eqnarray}
for $(p_{m},p_{1})\in\mathbb{R}_{+}\times\mathbb{R}^{L_{1}}_{++}$. 

\begin{theorem}\label{theoClassModelExistence}
Let $\mathcal{E}^{2}$ be a classical economy with fiat money. Suppose $\mathcal{E}^{2}$ satisfies Assumptions \ref{assBoundsPopAndEndownments}, \ref{assResourceRelated} and \ref{assCassCriterion}; $\mathcal{E}^{2}_{\geq1}$ satisfies Assumptions $\ref{assBoundsPrices}$, \ref{assProneSavingsEconomy} and \ref{assUniformBoundsDeltaAndL}; and there is $\lambda>0$ such that\footnote{It is worth highlighting that (\ref{eqLowerBoundG20}) requires that, for prices $p_{1}\in\mathbb{R}^{L_{1}}_{++}$ stated in terms of the numeraire (i.e., $p_{m}=1$), if some commodity is too cheap, then the aggregate demand of the older generation becomes incompatible with Assumption \ref{assBoundsPopAndEndownments}.}
\begin{eqnarray}\label{eqLowerBoundG20}
    \min_{1\leq i\leq L_{1}}\, p_{1i}<\lambda\implies  \sum_{h\in G^{2}_{0}}\Vert x^{h}_{1}(1,p_{1})\Vert>\beta\sum_{h\in G^{2}_{0}}\Vert e^{h}_{1}\Vert,
\end{eqnarray}
for $p_{1}\in\mathbb{R}^{L_{1}}_{++}$. Then, there exists a Pareto optimal monetary equilibrium of $\mathcal{E}^{2}$.
\end{theorem}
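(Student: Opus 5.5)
The plan is to reduce the classical economy $\mathcal{E}^{2}$ to an economy of the type $\mathcal{E}$, treated earlier in this section, by working with $\mathcal{E}^{2}_{\geq1}$. That economy is itself of the form $\mathcal{E}$ once time is re-indexed so that $t=1$ becomes the initial period. There are three steps: pick a Pareto optimal equilibrium of $\mathcal{E}^{2}_{\geq1}$ using Theorem \ref{theoExistenceOfParetoOptJME}, use the positive savings of its initial generation to price fiat money, and then check that the resulting price system is a Pareto optimal equilibrium of $\mathcal{E}^{2}$.

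\textbf{Step 1.} Since $\mathcal{E}^{2}_{\geq1}$ satisfies Assumptions \ref{assBoundsPopAndEndownments}--\ref{assUniformBoundsDeltaAndL}, I would invoke Theorem \ref{theoExistenceOfParetoOptJME} for it and obtain $\hat p=(\hat p_{1},\hat p_{2},\ldots)\in\mathcal{H}^{PO}(\mathcal{E}^{2}_{\geq1})$. By (\ref{eqMonetary}), the first generation $G^{2}_{1}$ then has
\[
S:=\sum_{h\in G^{2}_{1}}\hat p_{1}\cdot(e^{h}_{1}-x^{h}_{1}(\hat p_{1},\hat p_{2}))>\delta\,\Vert\hat p_{1}\Vert\sum_{h\in G^{2}_{1}}\Vert e^{h}_{1}\Vert>0.
\]
Market clearing in $\hat p$ holds for all periods $t\geq2$, but not yet in period $1$, because the old generation $G^{2}_{0}$ was not modeled there.

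\textbf{Step 2.} The task is to find a money price and a rescaling of $\hat p$ under which period $1$ clears. I would set $p_{m}=1$ and seek a price system $p=\mu\hat p$ with $\mu>0$ such that
\[
\sum_{h\in G^{2}_{0}}\bigl(x^{h}_{1}(1,\mu\hat p_{1})-e^{h}_{1}\bigr)=\sum_{h\in G^{2}_{1}}\bigl(e^{h}_{1}-x^{h}_{1}(\hat p_{1},\hat p_{2})\bigr).
\]
Demands of $G^{2}_{t}$, $t\geq1$, are homogeneous of degree zero, so the rescaling does not change them. Walras' law for $G^{2}_{0}$ gives $\mu\hat p_{1}\cdot\sum_{h\in G^{2}_{0}}(x^{h}_{1}-e^{h}_{1})=1$, which pins down $\mu=1/S$ in value terms.

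Matching the value alone, however, does not deliver the commodity-wise equality. This is where I expect the main obstacle, and I would not rely on simple rescaling. Instead I would rerun the fixed-point / truncation argument that underlies Theorems \ref{theoEqSetCompact} and \ref{theoExistenceOfParetoOptJME}, now for the full $\mathcal{E}^{2}$ with $p_{m}=1$ as numeraire. The concrete route is as follows:
\begin{itemize}
\item For each horizon $j$, the truncated economy of $G^{2}_{0},\ldots,G^{2}_{j}$ is an Arrow--Hahn economy with money, and Assumption \ref{assResourceRelated} gives existence of its equilibria.
\item Condition (\ref{eqLowerBoundG20}) bounds $p_{1}$ away from zero in numeraire terms.
\item Assumption \ref{assBoundsPrices} applied to $\mathcal{E}^{2}_{\geq1}$ keeps the relative prices $(p_{t},p_{t+1})$ in the boxes $\mathcal{B}_{t}(\sigma_{t})$.
\item Together these place truncated equilibria in a compact set analogous to $\mathcal{K}$, and a diagonal (Tychonoff) limit produces an equilibrium of $\mathcal{E}^{2}$ with $p_{m}=1$.
\end{itemize}
Positive money value is built in by the normalization $p_{m}=1$. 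What must still be verified is that the upper bound on $\Vert p_{1}\Vert$ survives the limit, so that money does not become worthless asymptotically; the upper bound on $p_{1}$ is derived from the period-$1$ clearing identity, which forces $\Vert p_{1}\Vert\,\delta\sum_{h\in G^{2}_{1}}\Vert e^{h}_{1}\Vert<1$.

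\textbf{Step 3: efficiency.} The aggregate real savings of every generation $G^{2}_{t}$, $t\geq1$, equal the real value of money, $1/\Vert p_{t}\Vert$, because money is the only store of value. Hence the relative savings are
\[
\frac{1}{\Vert p_{t}\Vert\sum_{h\in G^{2}_{t}}\Vert e^{h}_{t}\Vert}.
\]
If efficiency failed, Proposition \ref{propNecAndSuffConditionParetoOpt} (applied to $\mathcal{E}^{2}_{\geq1}$) would force these terms to tend to zero. Lemma \ref{lemmaSecondParetoOptEquilibria} together with prone-to-savings gives growth of $\Vert p_{t}\Vert\sum_{h\in G^{2}_{t}}\Vert e^{h}_{t}\Vert$ at rate at least $1+\varepsilon$ once the term drops below $\delta$. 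I would also need to rule out this decay along the constructed equilibrium by choosing it as a limit of truncated equilibria in which the terms stay above $\delta$, which is the approach of the proof of Theorem \ref{theoExistenceOfParetoOptJME}. The relative savings then stay above $\delta$, so the Cass series (Assumption \ref{assCassCriterion}) diverges and the monetary equilibrium is Pareto optimal.
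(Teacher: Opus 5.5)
\textbf{There is a genuine gap.} In Step 2 you correctly see that rescaling $\hat p$ cannot clear period $1$ commodity by commodity. The truncation argument you substitute for it, however, fails at its first bullet.

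In the truncation of $\mathcal{E}^{2}$ to $G^{2}_{0},\ldots,G^{2}_{j}$, nobody values money and $G^{2}_{j}$ cannot pass it on. Every Arrow--Hahn equilibrium of that economy therefore gives money a zero price. The normalization $p_{m}=1$ is then unavailable, and any limit of such equilibria is nonmonetary. This is the finite-horizon form of the Hahn problem.

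If you instead leave period $j+1$ uncleared, as in $\mathcal{V}_{j}$, Arrow--Hahn no longer gives existence at all. Even the nonemptiness of $\mathcal{V}_{j}$ in Theorem \ref{theoEqSetCompact} comes from an insulated economy whose equilibria have zero aggregate savings.

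For the same reason, two later steps are circular:
\begin{itemize}
\item Your upper bound $\Vert p_{1}\Vert\,\delta\sum_{h\in G^{2}_{1}}\Vert e^{h}_{1}\Vert<1$ already assumes that $G^{2}_{1}$'s relative savings exceed $\delta$ in the approximating equilibria. That is precisely what must be shown.
\item The claim in Step 3 that relative savings stay above $\delta$ rests on the same assumption.
\end{itemize}
In the proof of Theorem \ref{theoExistenceOfParetoOptJME}, that lower bound comes from the time-traveler household, which absorbs the last generation's savings and closes the loop. Your truncation has nothing playing that role.

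The paper's proof supplies two missing ingredients.

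First, it turns money into a period-$0$ good. Each $h\in G^{2}_{0}$ becomes $h'$ with endowment $(m^{h},e^{h}_{1})$ and no utility from $c_{0}$. It then adds an auxiliary household $h^{*}_{\zeta}$ with utility $f(\zeta)\log c_{0}+\sum_{i}\log c_{1i}$ and endowment $(1,\theta/\zeta,\ldots,\theta/\zeta)$. This household supplies the demand for money needed for Assumption \ref{assBoundsPrices} to hold for $G_{0}(\zeta)$. The choice of $f(\zeta)$ makes $G_{0}(\zeta)$ prone to savings. Theorem \ref{theoExistenceOfParetoOptJME}, with its time traveler, therefore applies to each $\mathcal{E}(n)$. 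It yields $p^{n}$ with $p^{n}_{0}=1$ and relative savings above $\delta$ in every period.

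Second, it passes to the limit:
\begin{itemize}
\item Condition (\ref{eqLowerBoundG20}) bounds $1/p^{n}_{11}$.
\item The normalized sequence $q^{n}=p^{n}_{\geq1}/p^{n}_{11}\in\mathcal{H}^{PO}_{\geq1}$ converges along a subsequence by compactness of that set.
\item The bound $>2\delta$ on $G_{0}(n)$'s savings controls $p^{n}_{11}/n$, so $h^{*}_{n}$'s period-$1$ consumption vanishes in the limit.
\item Lemma \ref{lemmaSecondParetoOptEquilibria} applied to $G_{1}(n)$ gives $\mu=\lim 1/p^{n}_{11}\geq\delta\Vert q_{1}\Vert\sum_{h\in G^{2}_{1}}\Vert e^{h}_{1}\Vert>0$.
\end{itemize}
To repair your plan, you need something that plays both roles: a source of demand for money in the approximating economies, and a loop-closing device that forces savings above $\delta$ there.
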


Theorem \ref{theoClassModelExistence} provides sufficient conditions for the existence of efficient monetary equilibria in classical non-stationary consumption-loan overlapping generations economies with fiat money. In short, it ``states that if the economy is prone to savings, then money has value and leads the economy to an efficient allocation, thus furnishing a solution to the \textcite{Hahn_1965} problem'' \parencite[p. 9]{Dognini_2026a}. 

\section{A Taylor rule based on real aggregate savings}\label{sec3}
I consider in this section a two-period overlapping generations economy with fiat money that differs from the classical one $\mathcal{E}^{2}$ in one aspect: Government now exists and, through monetary policy, influences the price level and long-term (i.e., interperiod) savings. Let $\mathcal{E}^{G}$ be this economy (the ``G'' stands for ``Government''). 

The real fundamentals (i.e., preferences and endowments, since we are dealing with consumption-loan economies) of $\mathcal{E}^{G}$ are the same as those of $\mathcal{E}^{2}$ and all assumptions from Theorem \ref{theoClassModelExistence} are satisfied, so that the existence of monetary equilibria is assured.

In the beginning of period $t\geq1$, Government has an outstanding amount of nominal debt $\sum_{h\in G_{t-1}}b^{h}\geq0$ sold to the generation that was born at $t-1\geq0$. Government pays these debts with fiat money, giving  $m^{h}=b^{h}$ to household $h\in G_{t-1}$. It also sells new bonds to the younger generation at price $(1+r_{t})^{-1}>0$, where $r_{t}\geq0$ is the nominal interperiod interest rate fixed through monetary policy in period $t\geq1$. I assume that $r_{t}\geq0$ since, otherwise, young households would prefer to hold money ``below their mattresses''. 

In each period, the Government budget balances so that
\begin{eqnarray}\label{eqGovBudget}
    \sum_{h\in G_{t-1}}b^{h}=\frac{1}{1+r_{t}}\sum_{h\in G_{t}}b^{h},
\end{eqnarray}
for $t\geq1$. Given the nominal interest rate $r_{t}\geq0$ and spot prices $(q_{t},q_{t+1})\in\mathbb{R}^{L_{t}+L_{t+1}}_{++}$ stated in terms of current fiat money, household $h\in G_{t}$, $t\geq1$, solves
\begin{eqnarray*}
    \tilde{x}^{h}(r_{t},q_{t},q_{t+1})=\textrm{argmax}_{(m_{t},c)\in\mathbb{R}\times\mathbb{R}^{L_{t}+L_{t+1}}_{+}}& u^{h} (c) \\
    \textrm{ s.t. }& q_{t}\cdot c_{t}+m_{t}=q_{t}\cdot e^{h}_{t}\\
    & (1+r_{t})^{-1}b=m_{t}\\
    &m_{t+1}=b\\
    &q_{t+1}\cdot c_{t+1}=q_{t+1}\cdot e^{h}_{t+1}+m_{t+1}.
\end{eqnarray*}
The optimization problem is written this way to illustrate the following sequence of actions. When young, households may sell their goods for fiat money $m_{t}>0$\footnote{Although I describe the case with $m_{t}>0$, we could have households $h\in G_{t}$ with $m_{t}=0$ (i.e., households that do not want to buy any bonds) or even $m_{t}<0$, what would change the interpretation of the model. This last case represents households that are borrowers from their own cohort, and one can see a detailed discussion on this matter after Theorem 8 in \textcite{Dognini_2026a}.}. Then, they can use this fiat money to buy $b=m_{t}(1+r_{t})>0$ bonds and finance excess old-age consumption. When old, their bonds are paid by the Government, yielding $m_{t+1}=b$, and with this fiat money in hand, they consume more than their endowment could afford. 

Clearly, this optimization can also be written briefly as
\begin{eqnarray}\label{eqUMP2}
    \tilde{x}^{h}(r_{t},q_{t},q_{t+1})=\textrm{argmax}_{c\in\mathbb{R}^{L_{t}+L_{t+1}}_{+}}& u^{h} (c) \\
    \textrm{ s.t. }& (q_{t},(1+r_{t})^{-1}q_{t+1})\cdot (c-e^{h})=0.\nonumber
\end{eqnarray}
Therefore, the real return rate on savings $r^{*}_{t}>-1$ satisfies the following Fisher relation
\begin{eqnarray}\label{eqFisher}
    1+r^{*}_{t}=\frac{\Vert q_{t}\Vert}{\Vert q_{t+1}\Vert}(1+r_{t})=\frac{1+r_{t}}{1+\pi_{t}},
\end{eqnarray}
with $\pi_{t}=\Vert q_{t+1}\Vert/\Vert q_{t}\Vert-1$, $t\geq1$, measuring inflation during generation $G_{t}$ lifespan. Notice that our reference baskets for measuring the real return rates $r^{*}_{t}$ and inflation rates $\pi_{t}$, $t\geq1$, are $(1,\ldots,1)\in\mathbb{R}^{L_{t}}_{++}$, $t\geq1$.

Furthermore, comparing (\ref{eqUMP}) and (\ref{eqUMP2}), we can see that for $h\in G_{t}$, $t\geq1$,
\begin{eqnarray}\label{eqEquivalencePrices}
    \tilde{x}^{h}(r_{t},q_{t},q_{t+1})=x^{h}(q_{t},(1+r_{t})^{-1}q_{t+1}),
\end{eqnarray}
with $r_{t}\geq0$, $(q_{t},q_{t+1})\in\mathbb{R}^{L_{t}+L_{t+1}}_{++}$ (notice that the $\sim$ overscript differentiates the forms in which the arguments of the Walrasian demands are posed).

Household $h\in G_{0}$ solves the following optimization problem
\begin{eqnarray}\label{eqUMPG02}
    \tilde{x}^{h}(q_{1})=\textrm{argmax}_{c\in\mathbb{R}^{L_{1}}_{+}}& u^{h} (c) \\
    \textrm{ s.t. }& m_{1}=b^{h}\nonumber\\
    &q_{1}\cdot c_{t}=m_{1}+q_{1}\cdot e^{h}_{1},\nonumber
\end{eqnarray}
with $b^{h}\geq0$ the bonds bought in the previous period (i.e., $t=0$). Comparing (\ref{eqUMPG0}) and (\ref{eqUMPG02}), we can see that for $h\in G_{0}$,
\begin{eqnarray}\label{eqEquivalencePricesG0}
    \tilde{x}^{h}(q_{1})=x^{h}(1,q_{1}),
\end{eqnarray}
with $q_{1}\in\mathbb{R}^{L_{1}}_{++}$. In the following definition, $r=(r_{1},r_{2},\ldots)\in\mathbb{R}^{\infty}_{+}$ is a \textit{monetary policy} (i.e., a sequence of nominal interest rates), and $q=(q_{1},q_{2},\ldots)\in\mathbb{R}^{\infty}_{++}$ is a sequence of \textit{spot prices}. 

\begin{definition}\label{defEquilibriaGovernment}
    Let $\mathcal{E}^{G}$ be the economy described above. Then, $(r,q)\in\mathbb{R}^{\infty}_{+}\times\mathbb{R}^{\infty}_{++}$ is an \textit{equilibrium} of $\mathcal{E}^{G}$ if
    \begin{eqnarray*}
    \sum_{h\in G_{0}}\tilde{x}^{h}(q_{1})+\sum_{h\in G_{1}}\tilde{x}^{h}(r_{1},q_{1},q_{2})=\sum_{h\in G_{0}}e^{h}_{1}+\sum_{h\in G_{1}}e^{h}_{1},
    \end{eqnarray*}
    and
    \begin{eqnarray*}
    \sum_{h\in G_{t-1}}\tilde{x}^{h}(r_{t-1},q_{t-1},q_{t})+\sum_{h\in G_{t}}\tilde{x}^{h}(r_{t},q_{t},q_{t+1})=\sum_{h\in G_{t-1}}e^{h}_{t}+\sum_{h\in G_{t}}e^{h}_{t},
    \end{eqnarray*}
    for $t\geq2$. Furthermore, the \textit{set of equilibria} of $\mathcal{E}^{G}$ is denoted by $\mathcal{H}^{G}\subseteq \mathbb{R}^{\infty}_{+}\times\mathbb{R}^{\infty}_{++}$.
\end{definition}
An equilibrium of $\mathcal{E}^{G}$, therefore, is a stated monetary policy $r\in\mathbb{R}^{\infty}_{+}$ and a sequence of spot prices $q\in\mathbb{R}^{\infty}_{++}$ that clear all goods markets and, due to Walras' law, also satisfies (\ref{eqGovBudget}). Given $(r,q)\in\mathcal{H}^{G}$, let $p_{1}=q_{1}$ and
\begin{eqnarray}\label{eqPQEquivalence}
    p_{t}=\frac{q_{t}}{\prod^{t-1}_{i=1}(1+r_{i})}
\end{eqnarray}
for $t\geq2$. Then,
\begin{eqnarray*}
    \sum_{h\in G_{0}}x^{h}(1,p_{1})+\sum_{h\in G_{1}}x^{h}(p_{1},p_{2})&=&\sum_{h\in G_{0}}x^{h}(1,q_{1})+\sum_{h\in G_{1}}x^{h}(q_{1},(1+r_{1})^{-1}q_{2})\\
    &=&\sum_{h\in G_{0}}\tilde{x}^{h}(q_{1})+\sum_{h\in G_{1}}\tilde{x}^{h}(r_{1},q_{1},q_{2})\\
    &=&\sum_{h\in G_{0}}e^{h}_{1}+\sum_{h\in G_{1}}e^{h}_{1}.
\end{eqnarray*}
where the penultimate equality is due to (\ref{eqEquivalencePrices}) and (\ref{eqEquivalencePricesG0}), and the last to Definition \ref{defEquilibriaGovernment}. Also, 
\begin{eqnarray*}
    \sum_{h\in G_{t-1}}x^{h}(p_{t-1},p_{t})+\sum_{h\in G_{t}}x^{h}(p_{t},p_{t+1})&=&\sum_{h\in G_{t-1}}x^{h}(q_{t-1},(1+r_{t-1})^{-1}q_{t})+\sum_{h\in G_{t}}x^{h}(q_{t},(1+r_{t})^{-1}q_{t+1})\\
    &=&\sum_{h\in G_{t-1}}\tilde{x}^{h}(r_{t-1},q_{t-1},q_{t})+\sum_{h\in G_{t}}\tilde{x}^{h}(r_{t},q_{t},q_{t+1})\\
    &=&\sum_{h\in G_{t-1}}e^{h}_{t}+\sum_{h\in G_{t}}e^{h}_{t},
\end{eqnarray*}
for $t\geq2$, where the first equality is due to homogeneity of Walrasian demand, the second to (\ref{eqEquivalencePrices}) and (\ref{eqEquivalencePricesG0}), and the last to Definition \ref{defEquilibriaGovernment}.
To ease notation, let $\rho:\mathcal{H}^{G}\rightarrow\mathbb{R}^{\infty}_{++}$, $\rho(r,q)=p$. Then, $\textbf{Img\,}\rho\subseteq \mathcal{H}^{2}$, where $\mathcal{H}^{2}$ is the set of monetary equilibria of $\mathcal{E}^{2}$ (i.e., every equilibrium of $\mathcal{E}^{G}$ can be associated with a monetary equilibrium of $\mathcal{E}^{2}$ that completely characterizes the real allocations of the economy).

We do not know yet, however, if $\mathcal{H}^{G}\neq\emptyset$. Theorem \ref{theoClassModelExistence} implies that $\mathcal{H}^{2}\neq\emptyset$. Let $p\in\mathcal{H}^{2}$ and $r\in\mathbb{R}^{\infty}_{+}$ be any monetary policy. Define $q_{1}=p_{1}$ and
\begin{eqnarray}\label{eqPQ2}
    q_{t}=p_{t}\prod^{t-1}_{i=1}(1+r_{i})
\end{eqnarray}
for $t\geq2$. Then,
\begin{eqnarray*}
    \sum_{h\in G_{0}}e^{h}_{1}+\sum_{h\in G_{1}}e^{h}_{1}&=&\sum_{h\in G_{0}}x^{h}(1,p_{1})+\sum_{h\in G_{1}}x^{h}(p_{1},p_{2})\\
    &=&\sum_{h\in G_{0}}x^{h}(1,q_{1})+\sum_{h\in G_{1}}x^{h}(q_{1},(1+r_{1})^{-1}q_{2})\\
    &=&\sum_{h\in G_{0}}\tilde{x}^{h}(q_{1})+\sum_{h\in G_{1}}\tilde{x}^{h}(r_{1},q_{1},q_{2})
\end{eqnarray*}
where the first equality is due to the definition of $\mathcal{H}^{2}$, the second to (\ref{eqPQ2}), and the last to (\ref{eqEquivalencePrices}) and (\ref{eqEquivalencePricesG0}). Also, 
\begin{eqnarray*}
    \sum_{h\in G_{t-1}}e^{h}_{t}+\sum_{h\in G_{t}}e^{h}_{t}&=&\sum_{h\in G_{t-1}}x^{h}(p_{t-1},p_{t})+\sum_{h\in G_{t}}x^{h}(p_{t},p_{t+1})\\
    &=&\sum_{h\in G_{t-1}}x^{h}(q_{t-1},(1+r_{t-1})^{-1}q_{t})+\sum_{h\in G_{t}}x^{h}(q_{t},(1+r_{t})^{-1}q_{t+1})\\
    &=&\sum_{h\in G_{t-1}}\tilde{x}^{h}(r_{t-1},q_{t-1},q_{t})+\sum_{h\in G_{t}}\tilde{x}^{h}(r_{t},q_{t},q_{t+1})
\end{eqnarray*}
for $t\geq2$, where the first equality is due to the definition of $\mathcal{H}^{2}$, the second to (\ref{eqPQ2}) and to the homogeneity of Walrasian demand, and the last to (\ref{eqEquivalencePrices}) and (\ref{eqEquivalencePricesG0}). Therefore, $(r,q)\in\mathcal{H}^{G}$ and so $\mathcal{H}^{G}\neq\emptyset$. 

Actually, this reasoning shows a stronger statement: that for all monetary policy $r\in\mathbb{R}^{\infty}_{+}$, $\textbf{Img\,}\rho(r,\cdot)=\mathcal{H}^{2}$, i.e., a \textit{fixed} monetary policy, whatever it may be, does not impose any constraints on the possible real allocations of the economy (a result that is reminiscent of \textcite{1975_SargentWallace}).

Next, suppose that $(r,q),(r^{\prime},q^{\prime})\in\mathcal{H}^{G}$ are two equilibria associated with the same monetary equilibrium $p\in\mathcal{H}^{2}$ (i.e., $\rho(r,q)=\rho(r^{\prime},q^{\prime})$). Then, (\ref{eqPQEquivalence}) implies $q_{1}=q^{\prime}_{1}$ and
\begin{eqnarray*}
    \frac{ q_{t}}{\prod^{t-1}_{i=1}(1+r_{i})}=\frac{q^{\prime}_{t}}{\prod^{t-1}_{i=1}(1+r^{\prime}_{i})}\implies \frac{\Vert q_{t}\Vert}{\Vert q^{\prime}_{t}\Vert}=\prod^{t-1}_{i=1}\frac{1+r_{i}}{1+r^{\prime}_{i}},
\end{eqnarray*}
for $t\geq2$. Therefore,
\begin{eqnarray*}
    \frac{1+\pi_{t}}{1+\pi^{\prime}_{t}}=\frac{\Vert q_{t+1}\Vert \Vert q^{\prime}_{t}\Vert}{\Vert q_{t}\Vert \Vert q^{\prime}_{t+1}\Vert}=\frac{1+r_{t}}{1+r^{\prime}_{t}}\implies r^{*}_{t}=r^{*\prime}_{t},
\end{eqnarray*}
for $t\geq1$, so that two monetary policy and spot prices define the same real outcomes only if real return rates are kept unchanged along with relative prices in each period. All these results are gathered in the next proposition.
\begin{proposition}\label{propCardinalitySetsEquilibria}
    Let $\mathcal{H}^{2}$ and $\rho(\cdot)$ be as defined above. Then, $\textbf{Img\,}\rho(r,\cdot)=\mathcal{H}^{2}$, $r\in\mathbb{R}^{\infty}_{+}$, and if $\rho(r,q)=\rho(r^{\prime},q^{\prime})$, then $q_{1}=q^{\prime}_{1}$, $q_{t}/\Vert q_{t}\Vert=q^{\prime}_{t}/\Vert q^{\prime}_{t}\Vert$, $t\geq2$, and $r^{*}_{t}=r^{*\prime}_{t}$, $t\geq1$.
\end{proposition}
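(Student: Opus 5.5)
The proof has two parts, both following the computations preceding the statement. I will work with the map $\rho(r,q)=p$ defined through (\ref{eqPQEquivalence}).

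\emph{Surjectivity.} For the first claim, fix an arbitrary monetary policy $r\in\mathbb{R}^{\infty}_{+}$. The inclusion $\textbf{Img\,}\rho(r,\cdot)\subseteq\mathcal{H}^{2}$ is the computation showing that $\rho(r,q)$ clears every market of $\mathcal{E}^{2}$. That computation uses (\ref{eqEquivalencePrices}), (\ref{eqEquivalencePricesG0}), zero-degree homogeneity of the Walrasian demands and Definition \ref{defEquilibriaGovernment}. For the reverse inclusion, take $p\in\mathcal{H}^{2}$, which is nonempty by Theorem \ref{theoClassModelExistence}, and build $q$ by (\ref{eqPQ2}). Then $q\in\mathbb{R}^{\infty}_{++}$, because each $1+r_i\ge1$. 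The displayed chain shows $(r,q)\in\mathcal{H}^{G}$. Formulas (\ref{eqPQEquivalence}) and (\ref{eqPQ2}) are mutually inverse once $r$ is fixed, so $\rho(r,q)=p$. This gives $\textbf{Img\,}\rho(r,\cdot)=\mathcal{H}^{2}$.

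\emph{Uniqueness of real outcomes.} Suppose $\rho(r,q)=\rho(r',q')=p$. At $t=1$, (\ref{eqPQEquivalence}) gives $q_1=p_1=q'_1$. For $t\ge2$ it gives
\begin{eqnarray*}
q_t = c_t\,q'_t, \qquad c_t=\prod_{i=1}^{t-1}\frac{1+r_i}{1+r'_i}>0 .
\end{eqnarray*}
Since $c_t$ is a positive scalar, normalizing yields $q_t/\Vert q_t\Vert=q'_t/\Vert q'_t\Vert$, and $\Vert q_t\Vert/\Vert q'_t\Vert=c_t$. Using $c_1=1$ for the case $t=1$, taking ratios of consecutive periods gives
\begin{eqnarray*}
\frac{1+\pi_t}{1+\pi'_t}=\frac{c_{t+1}}{c_t}=\frac{1+r_t}{1+r'_t}, \qquad t\ge1 .
\end{eqnarray*}
The Fisher relation (\ref{eqFisher}) then gives $1+r^{*}_t=1+r^{*\prime}_t$.

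\emph{Expected obstacle.} The only delicate point is bookkeeping. The empty product convention must be handled at $t=1$ and $t=2$, since $p_1=q_1$ is defined separately. One must also check that positivity of $q$ survives the construction, and that homogeneity is applied only to the second-period argument, where the factor $\prod_{i=1}^{t-1}(1+r_i)$ is common to both periods of generation $G_t$.
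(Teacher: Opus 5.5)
Your proposal is correct and follows the paper's own argument. The paper likewise proves $\textbf{Img\,}\rho(r,\cdot)=\mathcal{H}^{2}$ by the two market-clearing computations using (\ref{eqEquivalencePrices}), (\ref{eqEquivalencePricesG0}) and zero-degree homogeneity, and gets uniqueness from $\Vert q_t\Vert/\Vert q'_t\Vert=\prod_{i=1}^{t-1}(1+r_i)/(1+r'_i)$ together with the Fisher relation (\ref{eqFisher}); you additionally make explicit the inversion $\rho(r,q)=p$ and the normalization step, which the paper leaves implicit. One wording fix: homogeneity rescales the whole pair $(p_t,p_{t+1})$ by the common factor $\prod_{i=1}^{t-1}(1+r_i)$, not just the second-period argument, and what remains is the discount $(1+r_t)^{-1}$ on the second-period argument.
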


Proposition \ref{propCardinalitySetsEquilibria} reveals that $\#\mathcal{H}^{G}>\#\mathcal{H}^{2}$ (i.e., the cardinality of $\mathcal{H}^{G}$ is greater than that of $\mathcal{H}^{2}$) and, therefore, the multiplicity of equilibria that is inherent to $\mathcal{E}^{2}$ is reinforced when we deal with $\mathcal{E}^{G}$. In this sense, monetary policy increases the degree of indeterminacy of the economy.

To tackle this matter of indeterminacy, I proceed with the following definition.

\begin{definition}\label{defTaylorRule}
    A \textit{Taylor rule} is a function $\mathcal{T}:\mathcal{H}^{G}\rightarrow \mathbb{R}^{\infty}_{+}$ and an equilibrium $(r,q)\in \mathcal{H}^{G}$ is \textit{stable} relative to $\mathcal{T}(\cdot)$ if $r=\mathcal{T}(r,q)$. Furthermore, the \textit{set of stable equilibria} relative to $\mathcal{T}(\cdot)$ is denoted by $\mathcal{S}_{\mathcal{T}}\subseteq\mathcal{H}^{G}$.
\end{definition}

Definition \ref{defTaylorRule} provides a rule $\mathcal{T}(\cdot)$ under which the Government updates a given announced monetary policy $r\in\mathbb{R}^{\infty}_{+}$ based on this policy and the sequence of equilibrium spot prices $q\in\mathbb{R}^{\infty}_{++}$ that the economy has ``landed in''. This rule, therefore, drives the interaction between the Government and the overall economy (i.e., households) in the following way. 

Imagine that Government announces the monetary policy $r^{1}\in\mathbb{R}^{\infty}_{+}$ and the economy moves towards an equilibrium $(r^{1},q^{1})\in\mathcal{H}^{G}$. If the equilibrium outcome is compatible with the Taylor rule (i.e., if $r^{1}=\mathcal{T}(r^{1},q^{1})$), then no further action is taken. If not, the Government uses the rule to update his monetary policy to $r^{2}=\mathcal{T}(r^{1},q^{1})$, and the economy moves towards a new equilibrium $(r^{1},q^{1})\in\mathcal{H}^{G}$, in a sort of \textit{tâtonnement process}. 

Notice that I am \textit{not} modeling this tâtonnement dynamic; the matter of \textit{how} and to \textit{which} monetary equilibrium the economy will move is an engine that lies outside of this model. I do, however, assume that this process will eventually stabilize, and I am interested in finding the equilibria that are \textit{stable} relative to the Taylor rule $\mathcal{T}(\cdot)$. 

Let us first analyze the Taylor rule given by
\begin{eqnarray}\label{eqFirstTaylor}
    \mathcal{T}_{1}(r,q)=(r_{1}+\theta(\pi_{0}-\pi^{*}),r_{2}+\theta(\pi_{1}-\pi^{*}),\ldots)
\end{eqnarray}
with $\theta>0$ the response coefficient, $\Vert q_{0}\Vert>0$ given (i.e., the ``last year'' price level), and $\pi^{*}>-1$ the \textit{inflation target}. The Taylor rule $\mathcal{T}_{1}(\cdot)$ increases nominal rates when inflation is greater than the target and decreases nominal rates when inflation is lower than the target. Notice that if $(r,q)\in\mathcal{H}^{G}$ is stable relative to $\mathcal{T}_{1}(\cdot)$, then $\pi_{t}=\pi^{*}$, $t\geq0$. Therefore, $\mathcal{T}_{1}(\cdot)$ does bring the inflation rate towards its target. Furthermore, for $p=\rho(r,q)$, $(r,q)\in\mathcal{S}_{\mathcal{T}_{1}}$, we have $p\in\mathcal{H}^{2}$ and $\Vert p_{1}\Vert=\Vert q_{1}\Vert=(1+\pi^{*})\Vert q_{0}\Vert$, so that
\begin{eqnarray*}
    \rho(\mathcal{S}_{\mathcal{T}_{1}})\subseteq\{p\in\mathcal{H}^{2}\mid \Vert p_{1}\Vert=(1+\pi^{*})\Vert q_{0}\Vert\},
\end{eqnarray*}
and, therefore,
\begin{eqnarray}\label{eqInclusionT1}
    \mathcal{S}_{\mathcal{T}_{1}}\subseteq \rho^{-1}(\{p\in\mathcal{H}^{2}\mid \Vert p_{1}\Vert=(1+\pi^{*})\Vert q_{0}\Vert\})\cap\{(r,q)\in\mathcal{H}^{G}\mid \pi_{t}=\pi^{*}, t\geq0\}
\end{eqnarray}
Next, suppose there is $p\in\mathcal{H}^{2}$ with $\Vert p_{1}\Vert=(1+\pi^{*})\Vert q_{0}\Vert$, and define
\begin{eqnarray}
    q^{\prime}_{t}&=&\frac{p_{t}}{\Vert p_{t}\Vert}\Vert q_{0}\Vert(1+\pi^{*})^{t}\label{eqAux1}\\
    1+r^{\prime}_{t}&=&(1+\pi^{*})\frac{\Vert p_{t}\Vert}{\Vert p_{t+1}\Vert}\label{eqAux2},
\end{eqnarray}
for $t\geq1$. Notice that
\begin{eqnarray*}
    q^{\prime}_{1}=\frac{p_{1}}{\Vert p_{1}\Vert}\Vert q_{0}\Vert(1+\pi^{*})\implies q^{\prime}_{1}=p_{1},
\end{eqnarray*}
and that
\begin{eqnarray*}
    \frac{q^{\prime}_{t}}{\prod^{t-1}_{i=1}(1+r^{\prime}_{i})}=\frac{q^{\prime}_{t}}{\prod^{t-1}_{i=1}(1+\pi^{*})\Vert p_{i}\Vert/\Vert p_{i+1}\Vert}=\frac{q^{\prime}_{t}\Vert p_{t}\Vert}{(1+\pi^{*})^{t-1}\Vert p_{1}\Vert}=\frac{q^{\prime}_{t}\Vert p_{t}\Vert}{(1+\pi^{*})^{t}\Vert q_{0}\Vert}=p_{t},
\end{eqnarray*}
for $t\geq2$, where the first equality is due to (\ref{eqAux2}) and the third to (\ref{eqAux1}). We conclude that $\rho(r,q)=p$. Then, (\ref{eqAux1}) allows us to write  
\begin{eqnarray*}
    1+\pi_{t}=\frac{\Vert q^{\prime}_{t+1}\Vert}{\Vert q^{\prime}_{t}\Vert}=1+\pi^{*}\implies \pi_{t}=\pi^{*},
\end{eqnarray*}
for $t\geq0$, and we have
\begin{eqnarray*}
    (r^{\prime},q^{\prime})\in \rho^{-1}(\{p\in\mathcal{H}^{2}\mid \Vert p_{1}\Vert=(1+\pi^{*})\Vert q_{0}\Vert\})\cap\{(r,q)\in\mathcal{H}^{G}\mid \pi_{t}=\pi^{*}, t\geq0\}.
\end{eqnarray*}
Suppose $(\tilde{r},\tilde{q})\in \rho^{-1}(\{p\in\mathcal{H}^{2}\mid \Vert p_{1}\Vert=(1+\pi^{*})\Vert q_{0}\Vert\})\cap\{(r,q)\in\mathcal{H}^{G}\mid \pi_{t}=\pi^{*}, t\geq0\}$. Then, $\rho(\tilde{r},\tilde{q})=p=\rho(r^{\prime},q^{\prime})$, and Proposition \ref{propCardinalitySetsEquilibria} implies: $\tilde{q}_{1}=q^{\prime}_{1}=p_{1}\Vert q_{0}\Vert(1+\pi^{*})/\Vert p_{1}\Vert$; $\tilde{q}_{t}/\Vert \tilde{q}_{t}\Vert=q^{\prime}_{t}/\Vert q^{\prime}_{t}\Vert$, $t\geq2$; and $r^{*}_{t}=r^{*\prime}_{t}$, $t\geq1$. From this last equality of real return rates we obtain trhough the Fisher relation
\begin{eqnarray*}
    \frac{1+r^{\prime}_{t}}{1+\pi^{\prime}_{t}}=\frac{1+\tilde{r}_{t}}{1+\tilde{\pi}_{t}}\implies r^{\prime}_{t}=\tilde{r}_{t},
\end{eqnarray*}
for $t\geq1$, where the implication is due to the fact that $\pi_{t}^{\prime}=\tilde{r}_{t}=\pi^{*}$, $t\geq1$. Since all inflation rates are equal, then $\Vert q^{\prime}_{t}\Vert=\Vert \tilde{q}_{t}\Vert=\Vert q_{0}\Vert(1+\pi^{*})^{t}$, $t\geq1$. We conclude that $(\tilde{r},\tilde{q})=(r^{\prime},q^{\prime})$ and, therefore,
\begin{eqnarray}\label{eqUniqueElement}
    \{(r^{\prime},q^{\prime})\}=\rho^{-1}(\{p\})\cap\{(r,q)\in\mathcal{H}^{G}\mid \pi_{t}=\pi^{*}, t\geq0\}.
\end{eqnarray}
Notice again that $\Vert p_{1}\Vert=(1+\pi^{*})\Vert q_{0}\Vert$ implies $p_{1}=q^{\prime}_{1}$ and allows us to write
\begin{eqnarray*}
    \sum_{h\in G_{0}}e^{h}_{1}+\sum_{h\in G_{1}}e^{h}_{1}&=&\sum_{h\in G_{0}}x^{h}(1,p_{1})+\sum_{h\in G_{1}}x^{h}(p_{1},p_{2})\\
    &=&\sum_{h\in G_{0}}x^{h}(1,q^{\prime}_{1})+\sum_{h\in G_{1}}x^{h}(q^{\prime}_{1},(1+r^{\prime}_{1})^{-1}q^{\prime}_{2})\\
    &=&\sum_{h\in G_{0}}\tilde{x}^{h}(q^{\prime}_{1})+\sum_{h\in G_{1}}\tilde{x}^{h}(r^{\prime}_{1},q^{\prime}_{1},q^{\prime}_{2}),
\end{eqnarray*}
where the first equality is due to the definition of $\mathcal{H}^{2}$, the second to (\ref{eqAux1}) and (\ref{eqAux2}), and the third to (\ref{eqEquivalencePrices}) and (\ref{eqEquivalencePricesG0}). Also, 
\begin{eqnarray*}
    \sum_{h\in G_{t-1}}e^{h}_{t}+\sum_{h\in G_{t}}e^{h}_{t}&=&\sum_{h\in G_{t-1}}x^{h}(p_{t-1},p_{t})+\sum_{h\in G_{t}}x^{h}(p_{t},p_{t+1})\\
    &=&\sum_{h\in G_{t-1}}x^{h}(q^{\prime}_{t-1},(1+r^{\prime}_{t-1})^{-1}q^{\prime}_{t})+\sum_{h\in G_{t}}x^{h}(q^{\prime}_{t},(1+r^{\prime}_{t})^{-1}q^{\prime}_{t+1})\\
    &=&\sum_{h\in G_{t-1}}\tilde{x}^{h}(r^{\prime}_{t-1},q^{\prime}_{t-1},q^{\prime}_{t})+\sum_{h\in G_{t}}\tilde{x}^{h}(r^{\prime}_{t},q^{\prime}_{t},q^{\prime}_{t+1}),
\end{eqnarray*}
for $t\geq2$, where the first equality is due to the definition of $\mathcal{H}^{2}$, the second to (\ref{eqAux1}), (\ref{eqAux2}), and the homogeneity of Walrasian demand, and the last to (\ref{eqEquivalencePrices}) and (\ref{eqEquivalencePricesG0}). Therefore, $(r^{\prime},q^{\prime})\in\mathcal{S}_{\mathcal{T}_{1}}$ and, due to  (\ref{eqUniqueElement}), we have
\begin{eqnarray}\label{eq2ndInclusionT1}
    \rho^{-1}(\{p\in\mathcal{H}^{2}\mid \Vert p_{1}\Vert=(1+\pi^{*})\Vert q_{0}\Vert\})\cap\{(r,q)\in\mathcal{H}^{G}\mid \pi_{t}=\pi^{*}, t\geq0\}\subseteq\mathcal{S}_{\mathcal{T}_{1}}.
\end{eqnarray}
Finally, (\ref{eqInclusionT1}) and (\ref{eq2ndInclusionT1}) are gathered in the following proposition.
\begin{proposition}\label{propT1Stable}
    Let $\mathcal{H}^{2}$, $\rho(\cdot)$, $\mathcal{T}_{1}(\cdot)$ and $\mathcal{S}_{\mathcal{T}_{1}}$ be as defined above. Then, 
    \begin{eqnarray*}
        \mathcal{S}_{\mathcal{T}_{1}}=\rho^{-1}(\{p\in\mathcal{H}^{2}\mid \Vert p_{1}\Vert=(1+\pi^{*})\Vert q_{0}\Vert\})\cap\{(r,q)\in\mathcal{H}^{G}\mid \pi_{t}=\pi^{*}, t\geq0\}.
    \end{eqnarray*}
\end{proposition}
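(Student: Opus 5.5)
The plan is to prove the set equality by double inclusion, following the structure already set up in the text preceding the statement. For the inclusion $\subseteq$, take $(r,q)\in\mathcal{S}_{\mathcal{T}_{1}}$. Stability $r=\mathcal{T}_{1}(r,q)$ means, coordinate by coordinate, that $\theta(\pi_{t-1}-\pi^{*})=0$ for every $t\geq1$. Since $\theta>0$, this forces $\pi_{t}=\pi^{*}$ for all $t\geq0$, where $\pi_{0}=\Vert q_{1}\Vert/\Vert q_{0}\Vert-1$. In particular $\Vert q_{1}\Vert=(1+\pi^{*})\Vert q_{0}\Vert$. Because $p=\rho(r,q)$ has $p_{1}=q_{1}$ by (\ref{eqPQEquivalence}) and $\textbf{Img\,}\rho\subseteq\mathcal{H}^{2}$, we get $p\in\mathcal{H}^{2}$ with $\Vert p_{1}\Vert=(1+\pi^{*})\Vert q_{0}\Vert$. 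This gives (\ref{eqInclusionT1}).

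For the reverse inclusion, take $(\tilde r,\tilde q)$ in the right-hand set and let $p=\rho(\tilde r,\tilde q)$. The key step is to observe that any element of $\mathcal{H}^{G}$ with all inflation rates equal to $\pi^{*}$ is automatically stable. Indeed, if $\pi_{t}=\pi^{*}$ for every $t\geq0$, then every correction term $\theta(\pi_{t-1}-\pi^{*})$ vanishes, so $\mathcal{T}_{1}(\tilde r,\tilde q)=\tilde r$. This argument needs only that $(\tilde r,\tilde q)\in\mathcal{H}^{G}$, which holds by membership in the intersection, and that $\pi_{0}=\pi^{*}$ is encoded in the set, which it is.

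I would still present the explicit construction (\ref{eqAux1})--(\ref{eqAux2}) together with the uniqueness claim (\ref{eqUniqueElement}), because it shows the right-hand set is nonempty whenever some $p\in\mathcal{H}^{2}$ satisfies the norm condition. The argument for (\ref{eqUniqueElement}) goes as follows. Proposition \ref{propCardinalitySetsEquilibria} gives equal normalized spot prices and equal real rates. The Fisher relation (\ref{eqFisher}), with common inflation $\pi^{*}$, then gives equal nominal rates. Equal inflation from the common base $\Vert q_{0}\Vert$ gives equal price levels, so the normalized prices and their levels together pin down $q$.

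The main obstacle is bookkeeping rather than mathematics. First, $\pi_{0}$ depends on the exogenous $\Vert q_{0}\Vert$, so the $t=0$ condition is precisely what ties $\Vert p_{1}\Vert$ to $(1+\pi^{*})\Vert q_{0}\Vert$. Second, one must check that the constructed $(r',q')$ has $r'_{t}\geq0$. This requires $(1+\pi^{*})\Vert p_{t}\Vert/\Vert p_{t+1}\Vert\geq1$, which is not automatic. The direct stability argument above avoids this issue, since it takes an element already in $\mathcal{H}^{G}\subseteq\mathbb{R}^{\infty}_{+}\times\mathbb{R}^{\infty}_{++}$. For that reason I would make it the actual proof of the $\supseteq$ inclusion and keep the construction only as a remark on nonemptiness.
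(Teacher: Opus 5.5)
Your proof is correct. The inclusion of $\mathcal{S}_{\mathcal{T}_{1}}$ into the right-hand side is argued exactly as in (\ref{eqInclusionT1}). The reverse inclusion is where you genuinely depart from the paper.

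The paper fixes an admissible $p\in\mathcal{H}^{2}$ and builds $(r^{\prime},q^{\prime})$ from (\ref{eqAux1})--(\ref{eqAux2}). It then uses Proposition \ref{propCardinalitySetsEquilibria} and the Fisher relation to show this pair is the only element of $\rho^{-1}(\{p\})$ with constant inflation $\pi^{*}$; this is (\ref{eqUniqueElement}). It re-verifies market clearing and only then concludes stability and (\ref{eq2ndInclusionT1}).

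You observe instead that any $(\tilde r,\tilde q)\in\mathcal{H}^{G}$ with $\tilde\pi_{t}=\pi^{*}$ for all $t\geq0$ is a fixed point of $\mathcal{T}_{1}$, because every correction term $\theta(\tilde\pi_{t-1}-\pi^{*})$ vanishes. This settles the inclusion in one line. It also shows that the $\rho^{-1}(\cdot)$ factor in the statement is redundant, since $\pi_{0}=\pi^{*}$ and $p_{1}=q_{1}$ already give $\Vert p_{1}\Vert=(1+\pi^{*})\Vert q_{0}\Vert$.

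What the paper's longer route buys is extra information: over each admissible $p$ there is at most one stable equilibrium, given in closed form. Its cost is exactly the gap you point out. The paper asserts $(r^{\prime},q^{\prime})\in\mathcal{H}^{G}$ without checking $r^{\prime}_{t}\geq0$, so (\ref{eqUniqueElement}) can fail as written: the fibre can be empty rather than a singleton. The inclusion (\ref{eq2ndInclusionT1}) still survives, because any element of the fibre must coincide with $(r^{\prime},q^{\prime})$.

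The caveat is not vacuous. In Example \ref{ex1}, write $\eta_{t}=p_{t}/p_{t+1}$. Market clearing gives $\eta_{t}/(1+\eta_{t})=\eta_{t-1}^{2}/((1+g)(1+\eta_{t-1}))$ for $t\geq3$. So every inefficient equilibrium ($\eta_{2}<1+g$) has $\eta_{t}\to0$, and hence $1+r^{\prime}_{t}=(1+\pi^{*})\eta_{t}<1$ for large $t$.

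For the same reason, your side remark needs an extra condition. The construction yields nonemptiness only when some $p\in\mathcal{H}^{2}$ meets the norm condition and also satisfies $(1+\pi^{*})\Vert p_{t}\Vert\geq\Vert p_{t+1}\Vert$ for all $t\geq1$.
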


Proposition \ref{propT1Stable} identifies the set of stable equilibria of the Taylor rule $\mathcal{T}_{1}$ and reveals, in particular, that it does not depend on the actual value of $\theta>0$ (i.e., if the monetary policy is active, $\theta>1$, or not, $\theta\in(0,1]$).

Also, this result poses two concerns. First, if the target $\pi^{*}>-1$ is too low, $\mathcal{S}_{\mathcal{T}_{1}}$ may become empty and, therefore, no stable equilibria relative to $\mathcal{T}_{1}$ may exist. Second, even if $\mathcal{S}_{\mathcal{T}_{1}}\neq\emptyset$, there is no guaranty that $\rho(\mathcal{S}_{\mathcal{T}_{1}})\subseteq\mathcal{H}^{2PO}$ (i.e., we are not assured that the Taylor rule will select an \textit{efficient} monetary equilibrium). The next example illustrates how the Taylor rule $\mathcal{T}_{1}(\cdot)$ can lead to the nonexistence of stable equilibria or the selection of inefficient ones.
\begin{example}\label{ex1}
    Let $L_{t}=1$, $t\geq1$. There is a single household in each generation $G_{t}$, $t\geq1$, a common utility function $u(c_{1},c_{2})=\sqrt{c_{1}}+\sqrt{c_{2}}$, and endowments are given by $e^{1}=(1,0)$, $e^{2}=(d,0)$, $e^{t}=(d(1+g)^{t-2},0)$, $t\geq3$, $g>0$ (i.e., $d>0$ represents a possible productivity shock, after which the economy evolves at a constant rate $1+g>0$). The single household of generation $G_{0}$ has $b^{0}=1$ bonds and no endowment.

    The equilibrium equations for this economy can be written as
    \begin{eqnarray*}
        1-x^{1}_{1}(p_{1},p_{2})&=&\frac{1}{p_{1}}\\
        d-x^{2}_{1}(p_{2},p_{3})&=&x^{1}_{2}(p_{1},p_{2})\\
        d\eta-x^{3}_{1}(p_{3},p_{4})&=&x^{2}_{2}(p_{2},p_{3})\\
        d\eta^{2}-x^{4}_{1}(p_{4},p_{5})&=&x^{3}_{2}(p_{3},p_{4})\\
        &\vdots&
    \end{eqnarray*}

    Solving the system backwards (see \textcite[Section 4, pp. 8-10]{Dognini_2026a} for a discussion on backward calculation algorithms), we know that the monetary equilibrium price sequences in $\mathcal{H}^{2}$ are uniquely defined by the condition $p_{3}=p_{2}/\eta$, $0<\eta\leq (1+g)$. Then, the second equation implies
    \begin{eqnarray*}
        d-x^{2}_{1}(p_{2},p_{2}/\eta)=\frac{\eta d}{1+\eta}=\frac{p_{1}}{p_{2}}\frac{p_{1}}{p_{1}+p_{2}},
    \end{eqnarray*}
    for $0<\eta\leq (1+g)$, and the first equation implies
    \begin{eqnarray*}
        1-x^{1}_{1}(p_{1},p_{2})=\frac{p_{1}}{p_{1}+p_{2}}=\frac{1}{p_{1}},
    \end{eqnarray*}
    so that $p_{2}=(1+\eta)/\eta d$ and, therefore,
    \begin{eqnarray*}
        p_{1}=\frac{1+\sqrt{1+4(1+\eta)/\eta d}}{2}=f(d,\eta),
    \end{eqnarray*}
    for $0<\eta\leq (1+g)$. Next, assume, without loss of generality, that the price level at the immediate past was $q_{0}=1$ and that there is $(r,q)\in\mathcal{S}_{\mathcal{T}_{1}}$. Then,
    \begin{eqnarray}\label{eqInflationTarget}
        1+\pi^{*}=1+\pi_{0}=\frac{q_{1}}{q_{0}}=p_{1}=f(d,\eta),
    \end{eqnarray}
    for $0<\eta\leq(1+g)$. In particular, we have $\pi^{*}\geq f(d,1+g)-1>0$. Therefore, if $\pi^{*}<f(d,1+g)-1$, then $\mathcal{S}_{\mathcal{T}_{1}}=\emptyset$ (i.e., an inflation target that is too low disrupts all possibilities of stable equilibria). 
    
    Notice that the threshold $f(d,1+g)$ is strictly decreasing in both its arguments. Therefore, the magnitude of negative shocks that can hit the economy (represented here by low values of $d>0$) define how low the inflation target can go before disrupting stable equilibria (i.e., larger shocks lead to higher targets). Also, the prospects of future growth rates (represented here by $1+g>0$) also define how low the inflation target can go. If the prospect is of high growth, than the inflation target can go low; if not, it must accommodate the necessary real adjustments of the economy.

    Next, we can use (\ref{eqInflationTarget}) to find $\eta\in(0,1+g]$ (which defines the equilibria in $\mathcal{H}^{2}$) as a function of $\pi^{*}$, thus leading us to
    \begin{eqnarray*}
        \eta=\frac{1}{\pi^{*}(1+\pi^{*})d-1}.
    \end{eqnarray*}
    In particular, there is $\pi_{g}>0$ such that $(\pi_{g}(1+\pi_{g})d-1)^{-1}=1+g$. Notice that all values of $\pi^{*}\in[\pi_{g},+\infty)$ lead to $\mathcal{S}_{\mathcal{T}_{1}}\neq\emptyset$ (i.e., there is a continuum of possible inflation targets that do not disrupt the stable equilibria of the economy). 
    
    Furthermore, each of these possible values of $\pi^{*}$ selects a single monetary equilibrium $p\in\mathcal{H}^{2}$. However, there is a single inflation target that does selects the unique \textit{efficient monetary equilibrium}, which is precisely $\pi^{*}=\pi_{g}$. In all other cases (i.e., $\pi^{*}>\pi_{g}$), the Taylor rule $\mathcal{T}_{1}$ will lead the economy to an inefficient equilibrium. 
\end{example}

Example \ref{ex1} reveals that the Taylor rule (\ref{eqFirstTaylor}) has three properties: (i) it brings inflation towards its target $\pi^{*}>0$ in all stable equilibria; (ii) if the target is set too low, the rule leads to the non-existence of stable equilibria, and the threshold is related to the magnitude of adverse economic shocks and future growth rate prospects (i.e., the target must allow inflation to be used as an adjustment tool to structural worsening of the economic landscape); (iii) even if the target does not disrupt stable equilibria (i.e., if $\mathcal{S}_{\mathcal{T}_{1}}\neq\emptyset$), there is no guaranty that $\mathcal{T}_{1}$ will select an \textit{efficient monetary equilibrium}.

Actually, Example \ref{ex1} shows that if the target is set as $\pi^{*}>\pi_{g}$, then we \textit{certainly} will not select an efficient monetary equilibrium. Therefore, the target must be set with pinpoint accuracy: if $\pi^{*}>\pi^{g}$, the economy will be led to an inefficient monetary equilibrium; if $\pi^{*}<\pi^{g}$, the target will disrupt the stability of equilibria.

In order to forego this need for razor-sharp precision, let us define the following Taylor rule\footnote{For $A\subseteq \mathbb{R}$, the \textit{indicator function} $\mathbbm{1}_{A}:\mathbb{R}\rightarrow\{0,1\}$ is given by $\mathbbm{1}_{A}(x)=1$ if, and only if, $x\in A$.}
\begin{eqnarray}\label{eqSecondTaylor}
    \mathcal{T}_{2}(r,q)=(r_{1}+\theta\mathbbm{1}_{(\pi^{*},\infty)}(\pi_{0}),r_{2}+\mathbbm{1}_{(\pi^{*},\infty)}(\pi_{1}),\ldots)
\end{eqnarray}
for $\theta>0$ the response coefficient and $\pi^{*}>0$ the \textit{inflation ceiling}. Notice that for $(r,q)\in\mathcal{H}^{G}$, if $r=\mathcal{T}_{2}(r,q)$, then $\pi_{t}\leq\pi^{*}$, $t\geq0$, and this motivates the fact that $\pi^{*}>0$ represents a ceiling on the inflation rates and \textit{not} a target as in $\mathcal{T}_{1}$. Notice that (\ref{eqSecondTaylor}) states that monetary policy only reacts if inflation exceeds the ceiling, which is different from (\ref{eqFirstTaylor}), for which monetary policy reacts to \textit{any} deviation of inflation from its target.

The next example shows how this second Taylor rule bounds inflation rates and does not exclude efficient monetary equilibria if the ceiling is not perfectly adjusted.
\begin{example}\label{ex2}
    I build on Example \ref{ex1}. The economy works precisely as before, although now (\ref{eqInflationTarget}) becomes
    \begin{eqnarray}\label{eqInflationCeiling}
        1+\pi^{*}\geq1+\pi_{0}=\frac{q_{1}}{q_{0}}=p_{1}=f(d,\eta),
    \end{eqnarray}
    for $0<\eta\leq 1+g$. If the ceiling $\pi^{*}>-1$ is set too low, we incur once again in the non-existence problem. Suppose the ceiling is given by $\pi^{*}\geq\pi_{g}$. Notice that the unique efficient monetary equilibria of $\mathcal{E}^{2}$ is 
    \begin{eqnarray*}
        p=(p_{1},p_{2},p_{3},p_{4},\ldots)=\biggr(f(d,1+g),\frac{2+g}{(1+g)d},\frac{2+g}{(1+g)^{2}d},\frac{2+g}{(1+g)^3d},\ldots\biggr).
    \end{eqnarray*}
    Let $\pi=(\pi_{0},\pi_{1},\pi_{2},\ldots)\in\mathbb{R}^{\infty}_{++}$ be any vector of inflation rates that satisfy $1+\pi_{0}=f(d,1+g)$ and $0<\pi_{t}\leq \pi^{*}$. The definition $\pi_{g}$ implies $\pi_{0}=f(d,1+g)-1=\pi_{g}\leq\pi^{*}$, so the vector $\pi$ satisfies the Taylor rule ceiling.

    Then, define $(r,q)\in\mathbb{R}^{\infty}_{+}\times\mathbb{R}^{\infty}_{++}$ as
    \begin{eqnarray*}
        1+r_{t}&=&(1+\pi_{t})\frac{p_{t}}{p_{t+1}}
    \end{eqnarray*}
    for $t\geq1$, and $q_{1}=p_{1}=f(d,1+g)$,
    \begin{eqnarray*}
        q_{t}=p_{t}\prod^{t-1}_{i=1}(1+r_{i}),
    \end{eqnarray*}
    for $t\geq2$. Notice that $\rho(r,q)=p$ and, therefore, $(r,q)\in\mathcal{S}_{\mathcal{T}_{2}}$ leads the economy to the unique efficient monetary equilibrium.  Therefore, with this second Taylor rule, we do not rule out \textit{a priori} the efficient monetary equilibrium when the ceiling is not properly adjusted (i.e., if $\pi^{*}>\pi_{g}$).
\end{example}

Example \ref{ex2} reveals that a not-too-low \textit{inflation ceiling}, rather than an \textit{inflation target}, allows the Government to control inflation (thus preserving the purchasing power of fiat money) and does not preclude the possibility of the economy reaching efficient monetary equilibria in the probable case that the ceiling is not well adjusted (which, in the case of Example \ref{ex2}, boils down to $\pi^{*}>\pi_{g}$).

The second Taylor rule, however, does not guaranty that the economy will reach an efficient equilibrium. A closer look at Example \ref{ex2} reveals that the inefficient equilibria that can emerge are precisely the ones in which the real return rates tend towards zero and, therefore, $\lim_{t\rightarrow\infty}r_{t}-\pi_{t}=0$\footnote{This result relies on the completely skewed endowment distribution of households.}. In order to rule out these inefficient equilibria, let us define the following Taylor rule
\begin{eqnarray}\label{eqThirdTaylor}
    \mathcal{T}_{3}(r,q)=\biggr(r_{1}+\theta_{1}\mathbbm{1}_{(\pi^{*},\infty)}(\pi_{0})+\theta_{2} \mathbbm{1}_{(-\infty,\tau^{*}]}\biggr(\sum_{h\in G_{1}}s^{h}(q_{1},(1+r_{1})^{-1}q_{2})/\sum_{h\in G_{1}}e^{h}_{1}\biggr),\ldots\biggr)
\end{eqnarray}
with $\theta_{1},\theta_{2}>0$ the response coefficients, $\pi^{*}>-1$ the inflation ceiling and $\tau^{*}>0$ the \textit{relative real aggregate savings floor}. This third Taylor rule states that nominal rates will be increased if inflation rises above its ceiling or if the relative real aggregate savings fall below its floor (in this case, the Government increases nominal rates in an attempt to boost real rates and lead the economy to an efficient equilibrium).

Notice that if $r=\mathcal{T}_{3}(r,q)$, then $\pi_{t}\leq\pi^{*}$, $t\geq0$, and 
\begin{eqnarray}\label{eqLowerBoundRealSav}
    \frac{\sum_{h\in G_{t}}s^{h}(q_{t},(1+r_{t})^{-1}q_{t+1})}{\sum_{h\in G_{t}}e^{h}_{t}}>\tau^{*},
\end{eqnarray}
for $t\geq1$. Reasoning as in Example \ref{ex2}, we can once again conclude that, for an inflation target $\pi^{*}>-1$ that is not too low and a real aggregate savings floor $\tau^{*}>0$ that is not too high, $\mathcal{S}_{\mathcal{T}_{3}}\neq\emptyset$. Furthermore, for $p=\rho(r,q)$, $(r,q)\in\mathcal{S}_{\mathcal{T}_{3}}$, (\ref{eqLowerBoundRealSav}) implies
\begin{eqnarray*}
    \inf_{t\geq1}\frac{\sum_{h\in G_{t}}s^{h}(p_{t},p_{t+1})}{\sum_{h\in G_{t}}e^{h}_{t}}\geq\tau^{*},
\end{eqnarray*}
and Lemma \ref{lemmaFirstParetoOptEquilibria} allows us to conclude that $p\in\mathcal{H}^{2PO}$ (i.e., all stable equilibria possibly selected by $\mathcal{T}_{3}(\cdot)$ lead towards efficient monetary equilibria of $\mathcal{E}^{2}$).

The main takeaway from this section can be stated as follows. In prone-to-savings consumption-loan overlapping generations economies, a Taylor rule that is based on a not-too-low inflation ceiling and a not-too-high real aggregate savings floor\footnote{The relevance of monitoring the level of real savings for monetary policy had already been noticed in \textcite[p. 10]{Dognini_2026a}: ``Due to its macroeconomic implications, a result from Section 3 that deserves careful attention is Proposition 7, which shows that the dynamics of real savings per capita completely characterize efficient equilibria in prone-to-savings economies. When it comes to monetary policy, this indicates that it is important to monitor not only the price level (since inflation can make (1) converge and, therefore, lead to an inefficient allocation) but also real savings per capita.''} is capable of: (i) assuring the existence of stable equilibria; (ii) keeping the inflation rate below the ceiling; and, (iii) avoiding inefficient equilibria. Although a inflation-targeting Taylor rule, such as the one given by (\ref{eqFirstTaylor}), satisfies (i) and (ii), it will generally lead the economy towards inefficient equilibria and, therefore, will not satisfy (iii).  
\section{Concluding remarks}\label{sec4}

The interest rate rule in \textcite[p. 202]{Taylor_1993} has a nominal and a real component. The nominal component is an inflation target. The real component is the deviation of GDP from ``trend real GDP'' (which is a difficult number to calculate properly), thus furnishing a sort of GDP target. 

This paper presents an alternative Taylor rule that also has both a nominal and a real component. The nominal component, rather than an inflation target, is an inflation ceiling. The real component, rather than a GDP target, is a relative aggregate real savings floor (which, I believe, can be more easily measured than ``trend real GDP''). In some sense, the ceiling prevents the value-erosion of the monetary base, while the floor sustains real return rates in order to rule out pessimistic self-fulfilling views about the future.

This alternative Taylor rule is, in the consumption-loan models studied in this paper, preferable to the classical inflation targeting rule in the sense that it not only bounds inflation rates but also, under certain conditions (e.g., that the iterative process described after Definition \ref{defTaylorRule} reaches a stable equilibrium), leads the economy towards efficient equilibria (which is not the case in general for the classical rule, as shown by Example \ref{ex1}).

\appendix
\section*{Appendix}\label{appx}

All omitted proofs are stated in this \hyperref[appx]{Appendix} and closely follow those in \textcite[pp. 10-16]{Dognini_2026a}.

\begin{proof}[Proof of Theorem \ref{theoEqSetCompact}]
Let the \textit{set of $j$-sighted equilibria} $\mathcal{V}_{j}\subset\mathbb{R}^{\sum^{j+1}_{i=0}L_{i}}_{++}$, $j\geq1$, of economy $\mathcal{E}$ be given by
\begin{eqnarray}\label{defJSighted}
\mathcal{V}_{j}=\{p\in\mathbb{R}^{\sum^{j+1}_{i=0}L_{i}}_{++}\mid p_{01}=1,Z_{j}(p)=0, (p_{0},p_{1})\in\mathcal{B}_{0}(\sigma_{0}),(p_{j}/p_{j1},p_{j+1}/p_{j1})\in\mathcal{B}_{j}(\sigma_{j})\}.
\end{eqnarray}
First, we must prove that $\mathcal{V}_{j}\subset\mathbb{R}^{\sum^{j+1}_{i=0}L_{i}}_{++}$, $j\geq1$, is compact. Let $(p_{0},\ldots,p_{j+1})\in\mathcal{V}_{j}$, $j\geq1$. Then, $p_{01}=1$, $(p_{0},p_{1})\in\mathcal{B}_{0}(\sigma_{0})$ and $(p_{j}/p_{j1},p_{j+1}/p_{j1})\in\mathcal{B}_{j}(\sigma_{j})$. Market clearing equations at $1\leq t \leq j$ imply the following bound on aggregate demand of generation $G_{t}$ when young,
\begin{eqnarray*}
    \sum_{h\in G_{t}} \Vert x^{h}_{t}(p_{t},p_{t+1})\Vert
    \leq \sum_{h\in G_{t-1}\cup G_{t}} \Vert e^{h}_{t}\Vert\leq\beta \sum_{h\in G_{t}} \Vert e^{h}_{t}\Vert.
\end{eqnarray*}
Also, market clearing at $1\leq t \leq j$ implies the following bound on aggregate demand of generation $G_{t-1}$ when old,
\begin{eqnarray*}
 \sum_{h\in G_{t-1}} \Vert x^{h}_{t}(p_{t-1},p_{t})\Vert
\leq \sum_{h\in G_{t-1}\cup G_{t}}\Vert e^{h}_{t}\Vert\leq \beta  \sum_{h\in G_{t-1}} \Vert e^{h}_{t}\Vert.
\end{eqnarray*}
We conclude that the aggregate demand of generation $G_{t}$, $1\leq t\leq j-1$, satisfies
\begin{eqnarray*}
    \sum_{h\in G_{t}} \Vert x^{h}(p_{t},p_{t+1})\Vert= \sum_{h\in G_{t}} \Vert x^{h}_{t}(p_{t},p_{t+1})\Vert+ \sum_{h\in G_{t}} \Vert x^{h}_{t+1}(p_{t},p_{t+1})\Vert\leq \beta  \sum_{h\in G_{t}}\Vert e^{h}\Vert,
\end{eqnarray*}
and Assumption \ref{assBoundsPrices} implies that $(p_{t}/p_{t1},p_{t+1}/p_{t1})\in\mathcal{B}_{t}(\sigma_{t})$, $1\leq t\leq j-1$. Therefore,  $(p_{t}/p_{t1},p_{t+1}/p_{t1})\in\mathcal{B}_{t}(\sigma_{t})$, for all $0\leq t\leq j$. Since $(p_{0},p_{1})\in\mathcal{B}_{0}(\sigma_{0})$, then
\begin{eqnarray*}
    (\sigma_{0},\ldots,\sigma_{0})\leq p_{0} \leq  (\sigma_{0}^{-1},\ldots,\sigma_{0}^{-1})\\
    (\sigma_{0},\ldots,\sigma_{0})\leq p_{1} \leq  (\sigma_{0}^{-1},\ldots,\sigma_{0}^{-1}),
\end{eqnarray*}
so that $p_{0},p_{1}\in\mathcal{K}_{0}=\mathcal{K}_{1}$. In particular, $\sigma_{0}\leq p_{11}\leq \sigma_{0}^{-1}$. Then, $(p_{1}/p_{11},p_{2}/p_{11})\in\mathcal{B}_{1}(\sigma_{1})$ implies
\begin{eqnarray*}
     (\sigma_{0}\sigma_{1},\ldots,\sigma_{0}\sigma_{1})\leq 
     (\sigma_{1} p_{11},\ldots,\sigma_{1} p_{11})\leq p_{2} \leq  \biggr(\frac{p_{11}}{\sigma_{1}},\ldots,\frac{p_{11}}{\sigma_{1}}\biggr)\leq \biggr(\frac{1}{\sigma_{0}\sigma_{1}},\ldots,\frac{1}{\sigma_{0}\sigma_{1}}\biggr).
\end{eqnarray*}
By induction, we can assert that 
\begin{eqnarray*}
    \biggr(\prod^{t-1}_{j=0}\sigma_{j},\ldots,\prod^{t-1}_{j=0}\sigma_{j}\biggr)\leq p_{t} \leq \biggr(\frac{1}{\prod^{t-1}_{j=0}\sigma_{j}},\ldots,\frac{1}{\prod^{t-1}_{j=0}\sigma_{j}}\biggr),
\end{eqnarray*}
for $1\leq t\leq j+1$. We conclude that $p_{t}\in\mathcal{K}_{t}$, $0\leq t\leq j+1$, and therefore $\mathcal{V}_{j}\subseteq\prod^{j+1}_{t=0}\mathcal{K}_{t}$.

Next, let $\{p^{n}\}_{n\geq1}$ be any sequence with $p^{n}\in\mathcal{V}_{j}$, $n\geq1$, and $\lim_{n\rightarrow\infty}p^{n}=p\in\mathbb{R}^{\sum^{j+1}_{i=0}L_{i}}$. In particular, $p_{01}=\lim_{n\rightarrow\infty} p^{n}_{01}=1$. Also, for $0\leq t\leq j+1$, $p^{n}_{t}\in\mathcal{K}_{t}$, $n\geq1$, which implies that $p_{t}=\lim_{n\rightarrow\infty}p^{n}_{t}\in\mathcal{K}_{t}\subset\mathbb{R}^{L_{t}}_{++}$, since $\mathcal{K}_{t}$ is closed. Then, continuity of Walrasian demand functions on strictly positive prices leads to 
\begin{eqnarray*}
z_{t}(p_{t-1},p_{t},p_{t+1})=\lim_{n\rightarrow\infty} z_{t}(p^{n}_{t-1},p^{n}_{t},p^{n}_{t+1})=0,
\end{eqnarray*}
for $1\leq t\leq j$. Since $\mathcal{B}_{0}(\sigma_{0})$ and $\mathcal{B}_{j}(\sigma_{j})$ are closed and $\lim_{n\rightarrow\infty}p^{n}_{j1}=p_{j1}>0$, we also have $(p_{0},p_{1})\in \mathcal{B}_{0}(\sigma_{0})$ and  $(p_{j}/p_{j1},p_{j+1}/p_{j1})\in \mathcal{B}_{j}(\sigma_{j})$. Therefore, $p\in\mathcal{V}_{j}$ and $\mathcal{V}_{j}$ is a compact set. 

The fact that $\mathcal{V}_{j}$ is non-empty is derived after Theorem 5 from \textcite[119]{ArrowHahn_1971}. By Assumption \ref{assResourceRelated}, every household $h\in\bigcup^{j}_{t=0}G_{t}$ in the ``insulated'' economy formed by all generations until time $j$ is indirectly resource related to every other. Therefore, the aforementioned theorem and Assumption \ref{assBoundsPrices} imply that this economy has an equilibrium $p^{\prime}=(p^{\prime}_{0},\ldots,p^{\prime}_{j+1})\in\mathbb{R}^{\sum^{j+1}_{i=0}L_{i}}_{++}$, with $p^{\prime}_{01}=1$ after a suitable normalization. 

The market clearing equation for this insulated economy in period $t=0$ is given by $\sum_{h\in G_{0}}x^{h}_{0}(p^{\prime}_{0},p^{\prime}_{1})=\sum_{h\in G_{0}}e^{h}_{0}$, and in period $t=j+1$ is given by $\sum_{h\in G_{j}}x^{h}_{j+1}(p^{\prime}_{j},p^{\prime}_{j+1})=\sum_{h\in G_{j}}e^{h}_{j+1}$. Then, $\sum_{h\in G_{0}}\Vert x^{h}_{0}(p^{\prime}_{0},p^{\prime}_{1})\Vert<\beta\sum_{h\in G_{0}} \Vert e^{h}_{0}\Vert$ and $\sum_{h\in G_{j}}\Vert x^{h}_{j+1}(p^{\prime}_{j},p^{\prime}_{j+1})\Vert<\beta\sum_{h\in G_{j}}\Vert e^{h}_{j+1}\Vert$. Also, the market clearing equation in period $1\leq t\leq j$ is given by $z_{t}(p^{\prime}_{t-1},p^{\prime}_{t},p^{\prime}_{t+1})=0$, and, as before, it allows us to limit the demand from generations $G_{0}$ in $t=1$ and $G_{j}$ in $t=j$ by $\beta \sum_{h\in G_{0}} \Vert e^{h}_{1}\Vert$ and $\beta \sum_{h\in G_{j}}\Vert e^{h}_{j}\Vert$, respectively. Then, Assumption \ref{assBoundsPrices} allows us to conclude that $p_{0}\in \mathcal{B}(\sigma_{0})$ and $(p_{j}/p_{j1},p_{j+1}/p_{j1})\in\mathcal{B}(\sigma_{j})$, so that $p^{\prime}\in\mathcal{V}_{j}$ and $\mathcal{V}_{j}$ is non-empty.

Then, let $\mathcal{V}^{\infty}_{j}=\mathcal{V}_{j}
\times\mathbb{R}^{\infty}_{+}$, $j\geq1$. Notice that if $p\in\mathcal{V}^{\infty}_{j+1}$, then $(p_{0},p_{1})\in\mathcal{B}_{0}(\sigma_{0})$, $p_{01}=1$ and $Z_{j+1}(p_{0},\ldots,p_{j+2})=0$. In particular, $Z_{j}(p_{0},\ldots,p_{j+1})=0$ and we can, once again, use Assumption \ref{assBoundsPrices} and the market clearing equations in periods $t=j$ and $t=j+1$ to conclude that $(p_{j}/p_{j1},p_{j+1}/p_{j1})\in \mathcal{B}_{j}(\sigma_{j})$. Then, $(p_{0},\ldots,p_{j+1})\in \mathcal{V}_{j}$ and $p\in\mathcal{V}^{\infty}_{j}$. 

Therefore, $\mathcal{V}^{\infty}_{j+1}\subseteq\mathcal{V}^{\infty}_{j}$, $j\geq1$, and so $\{\mathcal{V}^{\infty}_{j}\cap\mathcal{K}\}_{j\geq1}$ is a nested sequence of sets, with $\mathcal{K}=\prod^{\infty}_{t=0}\mathcal{K}_{t}$ and  
\begin{eqnarray*}
    \mathcal{V}^{\infty}_{j}\cap \mathcal{K}=\mathcal{V}_{j}\times\prod^{\infty}_{t\geq j+2}\mathcal{K}_{t},
\end{eqnarray*}
for $j\geq1$. Then, Tychonoff's Theorem implies $\mathcal{V}^{\infty}_{j}\cap \mathcal{K}$, $j\geq1$, is non-empty and compact. 

First, we prove that $\mathcal{H}\subseteq\lim_{j\rightarrow\infty}(\mathcal{V}^{\infty}_{j}\cap\mathcal{K})$. Let $p\in\mathcal{H}$. Definition \ref{defSetEquilibria} implies $p_{01}=1$ and $(p_{0},p_{1})\in\mathcal{B}_{0}(\sigma_{0})$. By the same reasoning as before, we can use market clearing equations to limit the demand of generation $G_{t}$, $t\geq1$, by $\beta \sum_{h\in G_{t}}\Vert e^{h}\Vert$ and through Assumption \ref{assBoundsPrices} we have
$(p_{t}/p_{t1},p_{t+1}/p_{t1})\in\mathcal{B}_{t}(\sigma_{t})$, $t\geq1$. Also, by the same reasoning, we have $p_{t}\in\mathcal{K}_{t}$, $t\geq0$, thus yielding $p\in\mathcal{K}$. Then, $(p_{0},\ldots,p_{j+1})\in\mathcal{V}_{j}$, $j\geq1$, so that $p\in\mathcal{V}_{j}^{\infty}$, $j\geq1$. We conclude that $p\in \lim_{j\rightarrow\infty}\mathcal{V}^{\infty}_{j}\cap\mathcal{K}$ and, therefore, $\mathcal{H}\subseteq\lim_{j\rightarrow\infty}(\mathcal{V}^{\infty}_{j}\cap\mathcal{K})$. 

Next, if $p\in\lim_{j\rightarrow\infty}(\mathcal{V}^{\infty}_{j}\cap\mathcal{K})$, then $p\in \mathcal{V}^{\infty}_{j}\cap\mathcal{K}$, $j\geq1$, since $\{\mathcal{V}^{\infty}_{j}\cap\mathcal{K}\}_{j\geq1}$ is a nested sequence of sets. Then, (\ref{defJSighted}) implies $p_{01}=1$, $(p_{0},p_{1})\in\mathcal{B}_{0}(\sigma_{0})$ and $z_{t}(p_{t-1},p_{t},p_{t+1})=0$, $t\geq1$. We conclude that $p\in\mathcal{H}$ and, therefore, $\mathcal{H}=\lim_{j\rightarrow\infty}(\mathcal{V}^{\infty}_{j}\cap\mathcal{K})$. 

Since $\mathcal{H}=\lim_{j\rightarrow\infty}(\mathcal{V}^{\infty}_{j}\cap\mathcal{K})$, Cantor's Intersection Theorem implies that $\mathcal{H}$ is non-empty and compact. Furthermore, notice that
\begin{eqnarray*}
    \lim_{j\rightarrow\infty}\mathcal{V}_{j}^{\infty}=\lim_{j\rightarrow\infty}(\mathcal{V}^{\infty}_{j}\cap\mathcal{K})\subseteq\mathcal{K}\implies \mathcal{H}=\lim_{j\rightarrow\infty}\mathcal{V}^{\infty}_{j}\subseteq\mathcal{K}.
\end{eqnarray*}
\end{proof}

\begin{proof}[Proof of Lemma~{\upshape\ref{lemmaFirstParetoOptEquilibria}}]
Since $p\in\mathcal{H}$, $z_{t}(p_{t-1},p_{t},p_{t+1})=0$, $t\geq1$, implies that
\begin{eqnarray}\label{eq1LemmaFirst}
\sum_{h\in G_{t-1}}p_{t}\cdot(x^{h}_{t}(p_{t-1},p_{t})-e^{h}_{t})+\sum_{h\in G_{t}}p_{t}\cdot(x^{h}_{t}(p_{t},p_{t+1})-e^{h}_{t})=0.
\end{eqnarray}
Summing the budget constraints of all households $h\in G_{t}$, $t\geq0$, furnishes, through Walras' law, 
\begin{eqnarray}\label{eq2LemmaFirst}
\sum_{h\in G_{t}}p_{t}\cdot(x^{h}_{t}(p_{t},p_{t+1})-e^{h}_{t})+\sum_{h\in G_{t}}p_{t+1}\cdot(x^{h}_{t+1}(p_{t},p_{t+1})-e^{h}_{t+1})=0. 
\end{eqnarray}
Then, (\ref{eq1LemmaFirst}) and (\ref{eq2LemmaFirst}) imply that 
\begin{eqnarray*}
\sum_{h\in G_{0}}p_{0}\cdot(x^{h}_{0}(p_{0},p_{1})-e^{h}_{0})=\sum_{h\in G_{t}}p_{t}\cdot(x^{h}_{t}(p_{t},p_{t+1})-e^{h}_{t}),
\end{eqnarray*}
for $t\geq1$. Next, Definition \ref{defRealSavings} implies
\begin{eqnarray*}
    \frac{\sum_{h\in G_{t}} s^{h}(p_{t},p_{t+1})}{\sum_{h\in G_{t}}\Vert e^{h}_{t}\Vert}=\frac{\sum_{h\in G_{t}}p_{t}\cdot (e^{h}_{t}-x^{h}_{t}(p_{t},p_{t+1}))}{\Vert p_{t}\Vert\sum_{h\in G_{t}}\Vert e^{h}_{t}\Vert}=\frac{\sum_{h\in G_{0}}p_{0}\cdot(e^{h}_{0}-x^{h}_{0}(p_{0},p_{1}))}{\Vert p_{t}\Vert \sum_{h\in G_{t}}\Vert e^{h}_{t}\Vert}.
\end{eqnarray*}
for $t\geq0$. By Assumption \ref{assCassCriterion}, 
\begin{eqnarray*}
    \mathcal{H}^{PO}=\biggr\{(p_{0},p_{1},\ldots)\in\mathcal{H}\mid \sum^{+\infty}_{t=0}\frac{1}{\Vert p_{t}\Vert\sum_{h\in G_{t}}\Vert e^{h}_{t}\Vert}=+\infty\biggr\}.
\end{eqnarray*}
Since all terms in $\sum^{+\infty}_{t=0}(\Vert p_{t}\Vert\sum_{h\in G_{t}}\Vert e^{h}_{t}\Vert)^{-1}$ are positive, $p\notin\mathcal{H}^{PO}$ implies convergence of the series and, therefore, $\lim_{t\rightarrow\infty}(\Vert p_{t}\Vert\sum_{h\in G_{t}}\Vert e^{h}_{t}\Vert)^{-1}=0$. Finally,
\begin{eqnarray*}
 \lim_{t\rightarrow\infty} \frac{\sum_{h\in G_{t}} s^{h}(p_{t},p_{t+1})}{\sum_{h\in G_{t}}\Vert e^{h}_{t}\Vert}=\sum_{h\in G_{0}}p_{0}\cdot(e^{h}_{0}-x^{h}_{0}(p_{0},p_{1}))\lim_{t\rightarrow\infty}\frac{1}{\Vert p_{t}\Vert\sum_{h\in G_{t}}\Vert e^{h}_{t}\Vert}=0.
\end{eqnarray*}
\end{proof}

\begin{proof}[Proof of Lemma~{\upshape\ref{lemmaSecondParetoOptEquilibria}}]
Let $p\in\mathcal{H}$ and $t\geq0$ be such that $\sum_{h\in G_{t}}s(p_{t},p_{t+1})/\sum_{h\in G_{t}}\Vert e^{h}_{t}\Vert\leq \delta$. If $t=0$, then Definition \ref{defSetEquilibria} implies $p_{01}=1$ and $(p_{0},p_{1})\in\mathcal{B}_{0}(\sigma_{0})$. If $t>0$, market clearing equations and Assumption \ref{assBoundsPrices} imply $(p_{t}/p_{t1},p_{t+1}/p_{t1})\in\mathcal{B}_{t}(\sigma_{t})$. Then, by Assumption \ref{assProneSavingsEconomy}, we have
\begin{eqnarray}\label{eqLemmaSecondParetoOpt}
\frac{\Vert p_{t}\Vert}{\Vert p_{t+1}\Vert} \leq \frac{1}{1+\varepsilon}\frac{\sum_{h\in G_{t+1}}\Vert e^{h}_{t+1}\Vert}{\sum_{h\in G_{t}}\Vert e^{h}_{t}\Vert}.
\end{eqnarray}
Since $p\in\mathcal{H}$, we can write
\begin{eqnarray*}
    \frac{\sum_{h\in G_{t+1}}s^{h}(p_{t+1},p_{t+2})}{\sum_{h\in G_{t+1}}\Vert e^{h}_{t+1}\Vert}&=& \frac{\sum_{h\in G_{t+1}} p_{t+1}\cdot (e^{h}_{t+1}-x^{h}_{t+1}(p_{t+1},p_{t+2}))}{\Vert p_{t+1}\Vert\sum_{h\in G_{t+1}}\Vert e^{h}_{t+1}\Vert}\\
    &=&\frac{\sum_{h\in G_{t}}p_{t+1}\cdot (x^{h}_{t+1}(p_{t},p_{t+1})-e^{h}_{t+1})}{\Vert p_{t+1}\Vert\sum_{h\in G_{t+1}}\Vert e^{h}_{t+1}\Vert}\\
    &=&\frac{\Vert p_{t}\Vert}{\Vert p_{t+1}\Vert}\frac{\sum_{h\in G_{t}}p_{t}\cdot (e^{h}_{t}-x^{h}_{t}(p_{t},p_{t+1}))}{\Vert p_{t}\Vert\sum_{h\in G_{t+1}}\Vert e^{h}_{t+1}\Vert}\\
    &=&\frac{\Vert p_{t}\Vert\sum_{h\in G_{t}}\Vert e^{h}_{t}\Vert}{\Vert p_{t+1}\Vert\sum_{h\in G_{t+1}}\Vert e^{h}_{t+1}\Vert}\frac{ \sum_{h\in G_{t}}s^{h}(p_{t},p_{t+1})}{\sum_{h\in G_{t}}\Vert e^{h}_{t}\Vert}\leq \frac{\delta}{1+\varepsilon}<\delta,
\end{eqnarray*}
where the first equality is derived from Definition \ref{defProneSavings}; the second from $z_{t+1}(p_{t},p_{t+1},p_{t+2})=0$; the third from Walras' law applied to all households from generation $G_{t}$; and the previous to last inequality from (\ref{eqLemmaSecondParetoOpt}). Therefore, $\sum_{h\in G_{t+1}}s(p_{t+1},p_{t+2})/\sum_{h\in G_{t+1}}\Vert e^{h}_{t+1}\Vert\leq \delta$ and, by induction, $\sum_{h\in G_{t+i}}s(p_{t+i},p_{t+i+1})/\sum_{h\in G_{t+i}}\Vert e^{h}_{t+i}\Vert \leq \delta$, for all $i\geq 0$. 

Then, by Assumptions \ref{assBoundsPrices} and \ref{assProneSavingsEconomy}, 
\begin{eqnarray*}
\Vert p_{t+i+1}\Vert \geq \Vert p_{t+i}\Vert(1+\varepsilon)\frac{\sum_{h\in G_{t+i}}\Vert e^{h}_{t+i}\Vert}{\sum_{h\in G_{t+i+1}}\Vert e^{h}_{t+i+1}\Vert}
\end{eqnarray*}
for $i\geq 0$, and
\begin{eqnarray*}
\Vert p_{t+i}\Vert \geq  \Vert p_{t}\Vert(1+\varepsilon)^{i}\prod^{i-1}_{j=0}\frac{\sum_{h\in G_{t+j}}\Vert e^{h}_{t+j}\Vert}{\sum_{h\in G_{t+j+1}}\Vert e^{h}_{t+j+1}\Vert}=\Vert p_{t}\Vert(1+\varepsilon)^{i}\frac{\sum_{h\in G_{t}}\Vert e^{h}_{t}\Vert}{\sum_{h\in G_{t+i}}\Vert e^{h}_{t+i}\Vert}\\
\implies \frac{1}{\Vert p_{t+i} \Vert \sum_{h\in G_{t+i}}\Vert e^{h}_{t+i}\Vert}\leq \frac{1}{\Vert p_{t} \Vert \sum_{h\in G_{t}}\Vert e^{h}_{t}\Vert (1+\varepsilon)^{i}},
\end{eqnarray*}
for $i\geq 1$. Therefore, 
\begin{eqnarray*}
\sum^{+\infty}_{i=0}\frac{1}{\Vert p_{i} \Vert \sum_{h\in G_{i}}\Vert e^{h}_{i}\Vert}\leq\sum^{t-1}_{i=0}\frac{1}{\Vert p_{i}\Vert\sum_{h\in G_{i}}\Vert e^{h}_{i}\Vert}+\frac{1}{\Vert p_{t} \Vert \sum_{h\in G_{t}}\Vert e^{h}_{t}\Vert}\sum^{+\infty}_{i=0}\frac{1}{(1+\varepsilon)^{i}}<+\infty.
\end{eqnarray*}
By Assumption \ref{assCassCriterion}, we conclude that $p\notin\mathcal{H}^{PO}$.
\end{proof}

\begin{proof}[Proof of Proposition~{\upshape\ref{propNecAndSuffConditionParetoOpt}}]
Let $p\in\mathcal{H}$. If $p\notin\mathcal{H}^{PO}$, Lemma \ref{lemmaFirstParetoOptEquilibria} implies 
\begin{eqnarray*}
    \lim_{t\rightarrow\infty}\frac{\sum_{h\in G_{t}}s^{h}(p_{t},p_{t+1})}{\sum_{h\in G_{t}} \Vert e^{h}_{t}\Vert}=0.
\end{eqnarray*}
If $\lim_{t\rightarrow\infty}\sum_{h\in G_{t}}s(p_{t},p_{t+1})/\sum_{h\in G_{t}} \Vert e^{h}_{t}\Vert=0$, then there is $T\geq0$ such that 
\begin{eqnarray*}
\frac{\sum_{h\in G_{T}}s(p_{T},p_{T+1})}{\sum_{h\in G_{T}} \Vert e^{h}_{T}\Vert}\leq\delta,
\end{eqnarray*}
and, by Lemma \ref{lemmaSecondParetoOptEquilibria}, $p\notin\mathcal{H}^{PO}$.
\end{proof}

\begin{proof}[Proof of Theorem~{\upshape\ref{theoExistenceOfParetoOptJME}}]
Let $\mathcal{F}_{k}$, $k\geq0$, be the \textit{finite economy} formed by all generations $G_{t}$, $0\leq t\leq k$, from $\mathcal{E}$ and a single household with an endowment $e^{*}=(0,\ldots,0)\times (\sum_{h\in G_{k+1}}e^{h}_{k+1})\in\mathbb{R}^{L_{0}\times L_{k+1}}_{+}$ and utility function $u^{*}:\mathbb{R}^{L_{0}\times L_{k+1}}_{+}\rightarrow\mathbb{R}$, 
\begin{eqnarray*}
    u^{*}(c_{0},c_{k+1})=\sum^{L_{0}}_{i=1}\log c_{0i}.
\end{eqnarray*}
This ``star'' household can be seen as a time traveler, since he holds an endowment in period $k+1$ but only consumes in period $0$ (i.e., he trades with his parents' generation and with his $(k-1)$-great-grandparents' generation). Furthermore, this time traveler provides a way for ``chopping off a finite segment of the infinite model and then tying the two ends together to form a closed loop'' \parencite[p. 365]{CassYaari_1966}.

Notice that $u^{*}(\cdot)$ is continuous, non-decreasing, semi-strictly quasiconcave and without local maxima. Also, the time traveler is resource related to at least one household from generation $G_{k}$ and, therefore, Assumption \ref{assResourceRelated} implies that all households from $\mathcal{F}_{k}$ are indirectly resource related.

Then, Theorem 5 from \textcite[p. 119]{ArrowHahn_1971} and Assumption \ref{assBoundsPrices} imply the existence of an equilibrium $p^{k}=(p^{k}_{0},\ldots,p^{k}_{k+1})\in\mathbb{R}^{\sum^{k+1}_{i=0}L_{i}}_{++}$ for $\mathcal{F}_{k}$, $k\geq0$, with $p^{k}_{01}=1$ after a suitable normalization. Notice that the market clearing equation in period $t=0$ and Assumption \ref{assBoundsPopAndEndownments} allow us to write
\begin{eqnarray}
\sum_{h\in G_{0}} \Vert x^{h}_{0}(p^{k}_{0},p^{k}_{1})\Vert\leq\sum_{h\in G_{0}}\Vert x^{h}_{0}(p^{k}_{0},p^{k}_{1})\Vert + \Vert x^{*}_{0}(p^{k}_{0},p^{k}_{k+1})\Vert 
= \sum_{h\in G_{0}} \Vert e^{h}_{0}\Vert < \beta  \sum_{h\in G_{0}} \Vert e^{h}_{0}\Vert.\label{eqExist1}
\end{eqnarray}
The market clearing equation in period $t=k+1$ and Assumption \ref{assBoundsPopAndEndownments} also allow us to write
\begin{eqnarray}
\sum_{h\in G_{k}}\Vert  x^{h}_{k+1}(p^{k}_{k},p^{k}_{k+1})\Vert&=&\sum_{h\in G_{k}} \Vert x^{h}_{k+1}(p^{k}_{k},p^{k}_{k+1})\Vert + \Vert x^{*}_{k+1}(p^{k}_{0},p^{k}_{k+1})\Vert \nonumber\\
&=& \sum_{h\in G_{k}} \Vert e^{h}_{k+1}\Vert +\Vert e^{*}_{k+1}\Vert\nonumber\\
&=& \sum_{h\in G_{k}} \Vert e^{h}_{k+1}\Vert +\sum_{h\in G_{k+1}}\Vert e^{h}_{k+1}\Vert\nonumber\\
&\leq& \beta  \sum_{h\in G_{k}} \Vert e^{h}_{k+1}\Vert,\label{eqExist2}
\end{eqnarray}
where the first equality is due to the fact that $x^{*}_{k+1}(p^{k}_{0},p^{k}_{k+1})=0$. Furthermore, market clearing in $t=1$ and $t=k$, (\ref{eqExist1}), (\ref{eqExist2}) and Assumption \ref{assBoundsPrices} imply $(p^{k}_{0},p^{k}_{1})\in \mathcal{B}_{0}(\sigma_{0})$ and $(p^{k}_{k}/p^{k}_{k1},p^{k}_{k+1}/p^{k}_{k1})\in \mathcal{B}_{k}(\sigma_{k})$.

Therefore, reasoning, once again, as in the proof of Theorem \ref{theoEqSetCompact}, we conclude that $p^{k}_{t}\in\mathcal{K}_{t}$, for all $t\leq k+1$. Let $\tilde{p}^{\,k}\in \prod_{t\geq0}\mathcal{K}_{t}$ be an extension of $p^{k}$, $k\geq0$, to $\mathbb{R}^{\infty}$ (i.e., $\tilde{p}^{\,k}_{t}=p^{k}_{t}$, $0\leq t\leq k+1$). Since $\prod_{t\geq0}\mathcal{K}_{t}$ is compact, let $\{\tilde{p}^{\,k_{n}}\}_{n\geq1}$ be a convergent subsequence, with $p=\lim_{n\rightarrow\infty}\tilde{p}^{\,k_{n}}$.

Then, $p^{k}_{01}=1$ and $(p^{k}_{0},p^{k}_{1})\in\mathcal{B}_{0}(\sigma_{0})$, $k\geq0$, imply $p_{01}=1$ and $(p_{0},p_{1})\in\mathcal{B}_{0}(\sigma_{0})$. Also, for every $t\geq1$, there is $N\geq1$ such that $n\geq N$ implies $k_{n}\geq t$. The market clearing equation in period $t\geq1$ for the finite economy $\mathcal{F}_{k_{n}}$, $n\geq N$, implies 
\begin{eqnarray*}
    z_{t}(\tilde{p}^{\,k_{n}}_{t-1},\tilde{p}^{\,k_{n}}_{t},\tilde{p}^{\,k_{n}}_{t+1})=z_{t}(p^{k_{n}}_{t-1},p^{k_{n}}_{t},p^{k_{n}}_{t+1})=0,
\end{eqnarray*}
and, therefore, the continuity of $z_{t}(\cdot)$, $t\geq1$, over strictly positive prices implies
\begin{eqnarray*}
     z_{t}(p_{t-1},p_{t},p_{t+1})=\lim_{n\rightarrow\infty} z_{t}(\tilde{p}^{\,k_{n}}_{t-1},\tilde{p}^{\,k_{n}}_{t},\tilde{p}^{\,k_{n}}_{t+1})=0.
\end{eqnarray*}
We conclude that $p\in\mathcal{H}$. Suppose that $p\notin\mathcal{H}^{PO}$. Lemma \ref{lemmaFirstParetoOptEquilibria} implies 
\begin{eqnarray*}
    \lim_{t\rightarrow\infty}\frac{\sum_{h\in G_{t}}s^{h}(p_{t},p_{t+1})}{\sum_{h\in G_{t}}\Vert e^{h}_{t}\Vert}=0,
\end{eqnarray*}
and, therefore, there is $T\geq1$ such that
\begin{eqnarray*}
    \frac{\sum_{h\in G_{T}}s^{h}(p_{T},p_{T+1})}{\sum_{h\in G_{T}}\Vert e^{h}_{T}\Vert}<\frac{\delta}{2}.
\end{eqnarray*}
Convergence on the product topology implies that there is $m\geq 1$ such that $k_{m}\geq T$ and 
\begin{eqnarray}\label{eqExistSub}
    \frac{\sum_{h\in G_{T}}s^{h}(\tilde{p}^{\,k_{m}}_{T},\tilde{p}^{\,k_{m}}_{T+1})}{\sum_{h\in G_{T}}\Vert e^{h}_{T}\Vert}=\frac{\sum_{h\in G_{T}}s^{h}(p^{k_{m}}_{T},p^{k_{m}}_{T+1})}{\sum_{h\in G_{T}}\Vert e^{h}_{T}\Vert}\leq\delta.
\end{eqnarray}
If $k_{m}=T$, then (\ref{eqExistSub}) implies 
\begin{eqnarray}\label{eqExist3}
    \frac{\sum_{h\in G_{k_{m}}} s^{h}(p^{k_{m}}_{k_{m}},p^{k_{m}}_{k_{m}+1})}{\sum_{h\in G_{k_{m}}}\Vert e^{h}_{k_{m}}\Vert}\leq\delta.
\end{eqnarray}
If $k_{m}>T$, we can write
\begin{eqnarray*}
    \frac{\sum_{h\in G_{T+1}}s^{h}(p^{k_{m}}_{T+1},p^{k_{m}}_{T+2})}{\sum_{h\in G_{T+1}}\Vert e^{h}_{T+1}\Vert}&=& \frac{\sum_{h\in G_{T+1}} p^{k_{m}}_{T+1}\cdot (e^{h}_{T+1}-x^{h}_{T+1}(p^{k_{m}}_{T+1},p^{k_{m}}_{T+2}))}{\Vert p^{k_{m}}_{T+1}\Vert\sum_{h\in G_{T+1}}\Vert e^{h}_{T+1}\Vert}\\
    &=&\frac{\sum_{h\in G_{T}}p^{k_{m}}_{T+1}\cdot (x^{h}_{T+1}(p^{k_{m}}_{T},p^{k_{m}}_{T+1})-e^{h}_{T+1})}{\Vert p^{k_{m}}_{T+1}\Vert\sum_{h\in G_{T+1}}\Vert e^{h}_{T+1}\Vert}\\
    &=&\frac{\Vert p^{k_{m}}_{T}\Vert}{\Vert p^{k_{m}}_{T+1}\Vert}\frac{\sum_{h\in G_{T}}p_{T}\cdot (e^{h}_{T}-x^{h}_{T}(p^{k_{m}}_{T},p^{k_{m}}_{T+1}))}{\Vert p^{k_{m}}_{T}\Vert\sum_{h\in G_{T+1}}\Vert e^{h}_{T+1}\Vert}\\
    &=&\frac{\Vert p^{k_{m}}_{T}\Vert\sum_{h\in G_{T}}\Vert e^{h}_{T}\Vert}{\Vert p^{k_{m}}_{T+1}\Vert\sum_{h\in G_{T+1}}\Vert e^{h}_{T+1}\Vert}\frac{\sum_{h\in G_{T}}s^{h}(p^{k_{m}}_{T},p^{k_{m}}_{T+1})}{\sum_{h\in G_{T}}\Vert e^{h}_{T}\Vert}\leq \frac{\delta}{1+\varepsilon}<\delta,
\end{eqnarray*}
where the first equality is derived from Definition \ref{defRealSavings}; the second from $z_{T+1}(p_{T},p_{T+1},p_{T+2})=0$; the third from Walras' law applied to all households from generation $G_{T}$; and the penultimate inequality from Definition \ref{defProneSavings} and (\ref{eqExistSub}). Then, 
\begin{eqnarray*}
     \frac{\sum_{h\in G_{T+1}}s^{h}(p^{k_{m}}_{T+1},p^{k_{m}}_{T+2})}{\sum_{h\in G_{T+1}}\Vert e^{h}_{T+1}\Vert}\leq \delta,
\end{eqnarray*}
and, by induction, we conclude that (\ref{eqExist3}) remains valid. 

Next, considering the finite economy $\mathcal{F}_{k_{m}}$, we can write
\begin{eqnarray*}
    \frac{\sum_{h\in G_{k_{m}}} s^{h}(p^{k_{m}}_{k_{m}},p^{k_{m}}_{k_{m}+1})}{\sum_{h\in G_{k_{m}}}\Vert e^{h}_{k_{m}}\Vert}&=&\frac{\sum_{h\in G_{k_{m}}} p^{k_{m}}_{k_{m}}\cdot (e^{h}_{k_{m}}-x^{h}_{k_{m}}(p^{k_{m}}_{k_{m}},p^{k_{m}}_{k_{m}+1}))}{\Vert p^{k_{m}}_{k_{m}}\Vert \sum_{h\in G_{k_{m}}}\Vert e^{h}_{k_{m}}\Vert}\\
    &=&\frac{\sum_{h\in G_{k_{m}}} p^{k_{m}}_{k_{m}+1}\cdot (x^{h}_{k_{m}+1}(p^{k_{m}}_{k_{m}},p^{k_{m}}_{k_{m}+1})-e^{h}_{k_{m}+1})}{\Vert p^{k_{m}}_{k_{m}}\Vert \sum_{h\in G_{k_{m}}}\Vert e^{h}_{k_{m}}\Vert}\\
    &=&\frac{\sum_{h\in G_{k_{m}+1}} p^{k_{m}}_{k_{m}+1}\cdot e^{h}_{k_{m}+1}}{\Vert p^{k_{m}}_{k_{m}}\Vert \sum_{h\in G_{k_{m}}}\Vert e^{h}_{k_{m}}\Vert},
\end{eqnarray*}
where the first equality is derived from Definition \ref{defRealSavings}; the second from Walras' law applied to all households from generation $G_{k_{m}}$; and the third from market clearing at $t=k_{m}+1$ and the fact that $x^{*}_{k_{m}+1}(p^{k_{m}}_{k_{m}},p^{k_{m}}_{k_{m}+1})=0$. Then, (\ref{eqExist3}) implies
\begin{eqnarray*}
    \frac{\sum_{h\in G_{k_{m}+1}} p^{k_{m}}_{k_{m}+1}\cdot e^{h}_{k_{m}+1}}{\Vert p^{k_{m}}_{k_{m}}\Vert \sum_{h\in G_{k_{m}}}\Vert e^{h}_{k_{m}}\Vert}\leq \delta.
\end{eqnarray*}
 However, Definition \ref{defProneSavings} and (\ref{eqExist3}) imply
\begin{eqnarray*}
    \frac{\Vert p^{k_{m}}_{k_{m}}\Vert \sum_{h\in G_{k_{m}}}\Vert e^{h}_{k_{m}}\Vert}{\Vert p^{k_{m}}_{k_{m}+1}\Vert \sum_{h\in G_{k_{m}+1}}\Vert e^{h}_{k_{m}+1}\Vert}\leq \frac{1}{1+\varepsilon},
\end{eqnarray*}
and we can write
\begin{eqnarray}\label{eqExist4}
    \frac{p^{k_{m}}_{k_{m}+1}}{\Vert p^{k_{m}}_{k_{m}+1}\Vert}\cdot \frac{\sum_{h\in G_{k_{m}+1}} e^{h}_{k_{m}+1}}{\sum_{h\in G_{k_{m}+1}}\Vert e^{h}_{k_{m}+1}\Vert}<\delta.
\end{eqnarray}
Due to Assumption \ref{assUniformBoundsDeltaAndL}, let $\sigma=\inf_{t\geq0}\sigma_{t}>0$ and $L=\sup_{t\geq0}L_{t}<\infty$. Since $(p^{k_{m}}_{k_{m}}/p^{k_{m}}_{k_{m},1},p^{k_{m}}_{k_{m}+1}/p^{k_{m}}_{k_{m},1})\in B_{k_{m}}(\sigma_{k_{m}})\subseteq B_{k_{m}}(\sigma)$, we have
\begin{eqnarray*}
     \frac{p^{k_{m}}_{k_{m}+1}}{\Vert p^{k_{m}}_{k_{m}+1}\Vert}\cdot \frac{\sum_{h\in G_{k_{m}+1}} e^{h}_{k_{m}+1}}{\sum_{h\in G_{k_{m}+1}}\Vert e^{h}_{k_{m}+1}\Vert}&=&\frac{p^{k_{m}}_{k_{m}+1}/p^{k_{m}}_{k_{m},1}}{\Vert p^{k_{m}}_{k_{m}+1}/p^{k_{m}}_{k_{m},1}\Vert}\cdot \frac{\sum_{h\in G_{k_{m}+1}} e^{h}_{k_{m}+1}}{\sum_{h\in G_{k_{m}+1}}\Vert e^{h}_{k_{m}+1}\Vert}\\
     &\geq&\frac{(\sigma,\ldots,\sigma)}{L_{k_{m}+1}\sigma^{-1}}\cdot \frac{\sum_{h\in G_{k_{m}+1}} e^{h}_{k_{m}+1}}{\sum_{h\in G_{k_{m}+1}}\Vert e^{h}_{k_{m}+1}\Vert}\\
     &=&\frac{\sigma^{2}}{L_{k_{m}+1}}>\frac{\sigma^{2}}{L},
\end{eqnarray*}
so that (\ref{eqExist4}) implies $\sigma^{2}/L<\delta$, absurd, since we can assume without loss of generality $\delta<\sigma^{2}/L$. We conclude that $p\in\mathcal{H}^{PO}$ and, therefore, $\mathcal{H}^{PO}$ is non-empty.

Next, let $\{p^{n}\}_{n\geq1}$ be a sequence in $\mathcal{H}^{PO}$ with $\lim_{n\rightarrow\infty}p^{n}=p\in\mathbb{R}^{\infty}_{++}$. Since $\mathcal{H}^{PO}\subseteq\mathcal{H}$ and $\mathcal{H}$ is compact by Theorem \ref{theoEqSetCompact}, $p\in\mathcal{H}$. If $p\notin\mathcal{H}^{PO}$, Proposition \ref{propNecAndSuffConditionParetoOpt} implies that $\lim_{t\rightarrow\infty}\sum_{h\in G_{t}}s^{h}(p_{t},p_{t+1})/\sum_{h\in G_{t}}\Vert e^{h}_{t}\Vert=0$ and, therefore,  there is $T\geq0$ such that 
\begin{eqnarray*}
    \frac{\sum_{h\in G_{T}}s^{h}(p_{T},p_{T+1})}{\sum_{h\in G_{t}}\Vert e^{h}_{T}\Vert}<\delta/2.
\end{eqnarray*}
Convergence on the product topology implies that there is $N\geq1$ such that 
\begin{eqnarray*}
    \frac{\sum_{h\in G_{T}}s^{h}(p^{N}_{T},p^{N}_{T+1})}{\sum_{h\in G_{T}} e^{h}_{T}}\leq\delta.
\end{eqnarray*}
Then, Lemma \ref{lemmaSecondParetoOptEquilibria} applied to $p^{N}\in\mathcal{H}$ in period $T\geq0$ implies that $p^{N}\notin\mathcal{H}^{PO}$, absurd. Therefore, $p\in\mathcal{H}^{PO}$ and $\mathcal{H}^{PO}$ is closed. We conclude that $\mathcal{H}^{PO}$, as a closed subset of the compact set $\mathcal{H}$, is itself compact. Lastly, Lemma \ref{lemmaSecondParetoOptEquilibria} implies that
\begin{eqnarray*}
    \frac{\sum_{h\in G_{0}}s^{h}(p_{0},p_{1})}{\sum_{h\in G_{0}} \Vert e^{h}_{0}\Vert}>\delta,
\end{eqnarray*}
for $p\in\mathcal{H}^{PO}$.
\end{proof}

\begin{proof}[Proof of Theorem~{\upshape\ref{theoClassModelExistence}}]
Our goal is to apply Theorem \ref{theoExistenceOfParetoOptJME} and, in order to do so, it is necessary to ``extend'' generation $G^{2}_{0}$ of the economy $\mathcal{E}^{2}$ back to period $t=0$. Let $\zeta\geq1$ and $\mathcal{E}(\zeta)$ be the extended economy we envision, with a single perishable commodity in period $t=0$ (i.e., $L_{0}=1$). Let all generations $G_{t}(\zeta)$, $t\geq1$, from $\mathcal{E}(\zeta)$ coincide with those from $\mathcal{E}^{2}_{\geq1}$. For all households $h\in G^{2}_{0}$, we define $h^{\prime}\in G_{0}(\zeta)$ by a utility function $u^{h^{\prime}}:\mathbb{R}^{1+L_{1}}_{+}\rightarrow\mathbb{R}$, given by
\begin{eqnarray*}
    u^{h^{\prime}}(c_{0},c_{1})=u^{h}(c_{1}),
\end{eqnarray*}
and a nonzero endowment $e^{h^{\prime}}=(e^{h^{\prime}}_{0},e^{h^{\prime}}_{1})=(m^{h},e^{h}_{1})\in\mathbb{R}^{1+L_{1}}_{+}$. Clearly, for all $h^{\prime}\in G_{0}(\zeta)$, $u^{h^{\prime}}(\cdot)$ is continuous, non-decreasing, semi-strictly quasiconcave and without local maxima, and $\sum_{h^{\prime}\in G_{0}(\zeta)}e^{h^{\prime}}\in\mathbb{R}^{1+L_{1}}_{++}$. Utility maximization implies $x^{h^{\prime}}_{0}(p_{0},p_{1})=0$ and, therefore,
\begin{eqnarray*}
    s^{h^{\prime}}(p_{0},p_{1})=e^{h^{\prime}}_{0}=m^{h},
\end{eqnarray*}
for $(p_{0},p_{1})\in\mathbb{R}^{1+L_{1}}_{++}$ and $h^{\prime}\in G_{0}(\zeta)$. In particular, we have
\begin{eqnarray}\label{eqExisClassic1}
    \sum_{h^{\prime}\in G_{0}(\zeta)}s^{h^{\prime}}(p_{0},p_{1})=\sum_{h^{\prime}\in G_{0}(\zeta)}e^{h^{\prime}}_{0}=\sum_{h\in G_{0}^{2}}m^{h}=1.
\end{eqnarray}

Since $\mathcal{E}^{2}_{\geq1}$ satisfies Assumption \ref{assProneSavingsEconomy}, let $\varepsilon,\delta>0$ be given by Definition \ref{defProneSavings}. Assumption \ref{assBoundsPopAndEndownments} implies 
\begin{eqnarray}\label{eqNonBindingLimit}
    \alpha_{\min}<\frac{\sum_{h\in G_{1}(\zeta)}\Vert e^{h}_{1} \Vert}{\sum_{h^{\prime}\in G_{0}(\zeta)}\Vert e^{h^{\prime}}_{1}\Vert}=\frac{\sum_{h\in G^{2}_{1}}\Vert e^{h}_{1} \Vert}{\sum_{h\in G^{2}_{0}}\Vert e^{h}_{1}\Vert}<\alpha_{\max},
\end{eqnarray}
and that we can assume, without loss of generality, $\delta<1/2$.

Notice that $\sum_{h^{\prime}\in G_{0}(\zeta)} x^{h^{\prime}}_{0}(p_{0},p_{1})=0$ implies that, with these households only, we will not be able to fulfill Assumption \ref{assBoundsPrices}, since no matter how low $p_{0}=p_{m}\in\mathbb{R}_{++}$ becomes relative to $p_{1}\in\mathbb{R}^{L_{1}}_{++}$, the demand for money will never skyrocket.

Then, we must add a single household $h^{*}_{\zeta}$ to $G_{0}(\zeta)$ with a utility function $u^{h^{*}_{\zeta}}:\mathbb{R}^{1+L_{1}}_{+}\rightarrow\mathbb{R}$, given by
\begin{eqnarray*}
    u^{h^{*}_{\zeta}}(c_{0},c_{1})=f(\zeta)\log c_{0}+\sum^{L_{1}}_{i=1}\log c_{1i},
\end{eqnarray*}
with 
\begin{eqnarray*}
    f(\zeta)=\frac{\zeta L_{1}\sum_{h\in G_{1}}\Vert e^{h}_{1}\Vert}{2\theta(1+\varepsilon)}>0,
\end{eqnarray*}
and an endowment $e^{h^{*}_{\zeta}}=(1,\theta/\zeta,\ldots,\theta/\zeta)\in\mathbb{R}^{1+L_{1}}_{++}$, where $\theta>0$ can be chosen, due to (\ref{eqNonBindingLimit}) and $\zeta\geq1$, so that
\begin{eqnarray}\label{eqTheta}
    \frac{\sum_{h\in G_{1}(\zeta)}\Vert e^{h}_{1} \Vert}{\Vert e^{h^{*}_{\zeta}}_{1}\Vert+\sum_{h^{\prime}\in G_{0}(\zeta)}\Vert e^{h^{\prime}}_{1}\Vert}=\frac{\sum_{h\in G^{2}_{1}}\Vert e^{h}_{1} \Vert}{L_{1}\theta/\zeta+\sum_{h\in G^{2}_{0}}\Vert e^{h^{\prime}}_{1}\Vert}>\alpha_{\min},
\end{eqnarray}
and
\begin{eqnarray}\label{eqTheta2}
    \frac{\Vert e^{h^{*}_{\zeta}}_{1}\Vert+\sum_{h\in G_{1}(\zeta)}\Vert e^{h}_{1} \Vert}{\sum_{h^{\prime}\in G_{0}(\zeta)}\Vert e^{h^{\prime}}_{1}\Vert}=\frac{L_{1}\theta/\zeta+\sum_{h\in G^{2}_{1}}\Vert e^{h}_{1} \Vert}{\sum_{h\in G^{2}_{0}}\Vert e^{h^{\prime}}_{1}\Vert}<\alpha_{\max}.
\end{eqnarray}
This star-household provides a strictly positive demand for money
\begin{eqnarray*}
    x^{h^{*}_{\zeta}}(p_{0},p_{1})=\frac{p_{0}+\Vert p_{1}\Vert\theta/\zeta}{f(\zeta)+L_{1}}\biggr(\frac{f(\zeta)}{p_{0}},\frac{1}{p_{11}},\ldots,\frac{1}{p_{1L_{1}}}\biggr),
\end{eqnarray*}
so that
\begin{eqnarray}\label{eqExisClassic2}
    s^{h^{*}_{\zeta}}(p_{0},p_{1})=1-x^{h^{*}_{\zeta}}_{0}(p_{0},p_{1})=\frac{p_{0}L_{1}-f(\zeta)\Vert p_{1}\Vert\theta/\zeta}{p_{0}(f(\zeta)+L_{1})},
\end{eqnarray}
for $(p_{0},p_{1})\in\mathbb{R}^{1+L_{1}}_{++}$. Notice that
\begin{eqnarray}\label{eqExisClassic3}
    \frac{p_{0}L_{1}-f(\zeta)\Vert p_{1}\Vert\theta/\zeta}{p_{0}(f(\zeta)+L_{1})}+1\leq \delta \implies
    \frac{p_{0}}{\Vert p_{1}\Vert}\leq \frac{f(\zeta)\theta/\zeta}{L_{1}+(1-\delta)(f(\zeta)+L_{1})}.
\end{eqnarray}
Since
\begin{eqnarray*}
    \frac{f(\zeta)\theta/\zeta}{L_{1}+(1-\delta)(f(\zeta)+L_{1})}\leq \frac{f(\zeta)\theta}{\zeta L_{1}}=\frac{\sum_{h\in G_{1}}\Vert e^{h}_{1}\Vert}{2(1+\varepsilon)},
\end{eqnarray*}
we conclude, through (\ref{eqExisClassic1}), (\ref{eqExisClassic2}) and (\ref{eqExisClassic3}), that
\begin{eqnarray}\label{eqExisClassic4}
    s^{h^{*}_{\zeta}}(p_{0},p_{1})+\sum_{h^{\prime}\in G_{0}(\zeta)}s^{h^{\prime}}(p_{0},p_{1})\leq \delta\implies\frac{p_{0}}{\Vert p_{1}\Vert}\leq \frac{1}{1+\varepsilon}\frac{\sum_{h\in G_{1}(\zeta)}\Vert e^{h}_{1}\Vert}{e^{h^{*}_{\zeta}}_{0}+\sum_{h^{\prime}\in G_{0}(\zeta)} e^{h^{\prime}}_{0}},
\end{eqnarray}
for $(p_{0},p_{1})\in\mathbb{R}^{1+L_{1}}_{++}$.

We proceed to show that $\mathcal{E(\zeta)}$ satisfies Assumptions \ref{assBoundsPopAndEndownments}--\ref{assBoundsPrices} and \ref{assProneSavingsEconomy}-\ref{assUniformBoundsDeltaAndL}.

First, since all generations $G_{t}(\zeta)$, $t\geq1$, from the extended economy $\mathcal{E}(\zeta)$ coincide with those from $\mathcal{E}^{2}_{\geq1}$, and $\mathcal{E}^{2}_{\geq1}$ satisfies Assumption \ref{assBoundsPopAndEndownments}, we only need to check generation $G_{0}(\zeta)$. Then, (\ref{eqTheta}) implies the following bounds on the intraperiod growth rate between endowments at $t=1$
\begin{eqnarray*}
    \alpha_{\min}<\frac{\sum_{h\in G_{1}(\zeta)}\Vert e^{h}_{1} \Vert}{\Vert e^{h^{*}_{\zeta}}_{1}\Vert+\sum_{h^{\prime}\in G_{0}(\zeta)}\Vert e^{h^{\prime}}_{1}\Vert}<\frac{\sum_{h\in G^{2}_{1}}\Vert e^{h}_{1} \Vert}{\sum_{h\in G^{2}_{0}(\zeta)}\Vert e^{h}_{1}\Vert}<\alpha_{\max},
\end{eqnarray*}
so that Assumption \ref{assBoundsPopAndEndownments} is satisfied. Assumption \ref{assResourceRelated} is also satisfied by $\mathcal{E}(\zeta)$, since $h^{*}_{\zeta}$ is resource related to at least one $h^{\prime}\in G_{0}(\zeta)$ and households from $\mathcal{E}^{2}$ are indirectly resource related.

Let $\sigma_{t}\in(0,1)$, $t\geq1$, be given by Assumption \ref{assBoundsPrices} when considering the economy $\mathcal{E}^{2}_{\geq1}$, so that 
\begin{eqnarray}\label{eqExisClassic5}
\biggr(\frac{p_{t}}{p_{t1}},\frac{p_{t+1}}{p_{t1}}\biggr)\notin\mathcal{B}_{t}(\sigma_{t})\implies  \sum_{h\in G_{t}(\zeta)}\Vert x^{h}(p_{t},p_{t+1})\Vert=\sum_{h\in G^{2}_{t}} \Vert x^{h}(p_{t},p_{t+1})\Vert>\beta \sum_{h\in G^{2}_{t}}\Vert e^{h}\Vert,
\end{eqnarray}
for $(p_{t},p_{t+1})\in\mathbb{R}^{L_{t}+L_{t+1}}_{++}$. Next, notice that there is $\sigma_{0}(\zeta)\in(0,1)$ such that
\begin{eqnarray*}
    \biggr(1,\frac{p_{1}}{p_{0}}\biggr)\notin\mathcal{B}_{0}(\sigma_{0}(\zeta))\implies \Vert x^{h^{*}_{\zeta}}(p_{0},p_{1})\Vert>\beta (\Vert e^{h^{*}_{\zeta}}\Vert+\sum_{h^{\prime}\in G_{0}(\zeta)}\Vert e^{h^{\prime}}\Vert),
\end{eqnarray*}
for $(p_{0},p_{1})\in\mathbb{R}^{1+L_{1}}_{++}$. Since
\begin{eqnarray*}
    \Vert x^{h^{*}_{\zeta}}(p_{0},p_{1})\Vert+ \sum_{h^{\prime}\in G_{0}(\zeta)}\Vert x^{h^{\prime}}(p_{0},p_{1})\Vert>\Vert x^{h^{*}_{\zeta}}(p_{0},p_{1})\Vert,
\end{eqnarray*}
we have
\begin{eqnarray}\label{eqExisClassic6}
    \biggr(1,\frac{p_{1}}{p_{0}}\biggr)\notin\mathcal{B}_{0}(\sigma_{0}(\zeta))\implies \Vert x^{h^{*}_{\zeta}}(p_{0},p_{1})\Vert+ \sum_{h^{\prime}\in G_{0}(\zeta)}\Vert x^{h^{\prime}}(p_{0},p_{1})\Vert>\beta (\Vert e^{h^{*}_{\zeta}}\Vert+\sum_{h^{\prime}\in G_{0}(\zeta)}\Vert e^{h^{\prime}}\Vert),
\end{eqnarray}
for $(p_{0},p_{1})\in\mathbb{R}^{1+L_{1}}_{++}$. We conclude, by (\ref{eqExisClassic5}) and (\ref{eqExisClassic6}), that $\mathcal{E}(\zeta)$ satisfies Assumption \ref{assBoundsPrices}.

Since all generations $G_{t}(\zeta)$, $t\geq1$, from $\mathcal{E}(\zeta)$ coincide with those from $\mathcal{E}^{2}_{\geq1}$, and $\mathcal{E}^{2}_{\geq1}$ satisfies Assumption \ref{assProneSavingsEconomy}, we only need, once again, to check generation $G_{0}(\zeta)$. For $(1,p_{1}/p_{0})\in \mathcal{B}_{0}(\sigma_{0}(\zeta))$, $(p_{0},p_{1})\in\mathbb{R}^{1+L_{1}}_{++}$, the result follows directly from (\ref{eqExisClassic4}). Furthermore, Assumption \ref{assUniformBoundsDeltaAndL} is clearly satisfied by $\mathcal{E}(\zeta)$.

Lastly, notice that we do not need to verify whether Assumption \ref{assCassCriterion} is satisfied by $\mathcal{E}(\zeta)$. If it is not, then the subset $\mathcal{H}^{PO}(\zeta)$ must be read not as the ``subset of Pareto optimal equilibria,'' but as the ``subset of equilibria that satisfy the Cass criterion.'' 

This possible different interpretation, however, does not prevent us from applying Theorem \ref{theoExistenceOfParetoOptJME} to $\mathcal{E}(n)$, $n\geq1$, to find a sequence $\{p^{n}\}_{n\geq1}$ such that $p^{n}\in\mathcal{H}^{PO}(n)$. Then, the market clearing equation in period $t=1$ allows us to write
\begin{eqnarray*}
     \sum_{h\in G^{2}_{0}}\Vert x^{h}_{1}(1,p^{n}_{1})\Vert &=&  \sum_{h^{\prime}\in G_{0}(n)}\Vert x^{h^{\prime}}_{1}(p^{n}_{0},p^{n}_{1})\Vert\\
    &\leq& \Vert e^{h^{*}_{n}}_{1}\Vert+\sum_{h^{\prime}\in G_{0}(n)}\Vert e^{h^{\prime}}_{1}\Vert+\sum_{h\in G_{1}(n)}\Vert e^{h}_{1}\Vert\\
    &=&\Vert e^{h^{*}_{n}}_{1}\Vert+\sum_{h\in G^{2}_{0}} \Vert e^{h}_{1}\Vert+\sum_{h\in G^{2}_{1}}\Vert e^{h}_{1}\Vert\\
    &\leq&\beta\sum_{h\in G^{2}_{0}} \Vert e^{h}_{1}\Vert,
\end{eqnarray*}
with this last inequality due to (\ref{eqTheta2}). By assumption of the theorem, this implies $p^{n}_{11}>\lambda>0$, $n\geq1$, and therefore 
\begin{eqnarray}\label{eqBoundPn11}
   0<\frac{1}{p^{n}_{11}}<\frac{1}{\lambda},
\end{eqnarray}
for $n\geq1$. Next, let $q^{n}=(q^{n}_{1},q^{n}_{2},\ldots)=(p^{n}_{1}/p^{n}_{11},p^{n}_{2}/p^{n}_{11},\ldots)\in\mathbb{R}^{\infty}_{++}$, $n\geq1$, so that $q^{n}_{11}=1$, $p^{n}=(1,p^{n}_{11}q^{n})$, and
\begin{eqnarray}\label{eqExisClassic7}
    \sum^{\infty}_{t=1}\frac{1}{\Vert q^{n}_{t}\Vert\sum_{h\in G^{2}_{t}}\Vert e^{h}_{t}\Vert}=p^{n}_{11}\sum^{\infty}_{t=1}\frac{1}{\Vert p^{n}_{t}\Vert \sum_{h\in G^{2}_{t}}\Vert e^{h}_{t}\Vert}=+\infty,
\end{eqnarray}
for $n\geq1$. Notice that the market clearing equation in period $t=1$ for the economy $\mathcal{E}(n)$ and the homogeneity of demand imply
\begin{eqnarray}
    \sum_{h\in G^{2}_{1}} \Vert x^{h}_{1}(q^{n}_{1},q^{n}_{2})\Vert&=&\sum_{h\in G_{1}(n)} \Vert x^{h}_{1}(q^{n}_{1},q^{n}_{2})\Vert\nonumber\\
    &\leq& \Vert x^{h^{*}_{n}}_{1}(p^{n}_{0},p^{n}_{1})\Vert+\sum_{h^{\prime}\in G_{0}(n)} \Vert x^{h^{\prime}}_{1}(p^{n}_{0},p^{n}_{1})\Vert+\sum_{h\in G_{1}(n)} \Vert x^{h}_{1}(p^{n}_{1},p^{n}_{2})\Vert\nonumber\\
    &=&\Vert e^{h^{*}_{n}}_{1}\Vert+\sum_{h\in G^{2}_{0}} \Vert e^{h}_{1}\Vert+\sum_{h\in G^{2}_{1}} \Vert e^{h}_{1}\Vert\nonumber\\
    &\leq&\beta \sum_{h\in G^{2}_{1}} \Vert e^{h}_{1}\Vert. \label{eqFirstBoundG1},
\end{eqnarray}
with this last inequality due to (\ref{eqTheta}). Furthermore, the market clearing equation in period $t\geq2$ and the homogeneity of demand imply
\begin{eqnarray}
    \sum_{h\in G^{2}_{t-1}}x^{h}_{t}(q^{n}_{t-1},q^{n}_{t})+\sum_{h\in G^{2}_{t}}x^{h}_{t}(q^{n}_{t},q^{n}_{t+1})&=&\sum_{h\in G_{t-1}(n)}x^{h}_{t}(p^{n}_{t-1},p^{n}_{t})+\sum_{h\in G_{t}(n)}x^{h}_{t}(p^{n}_{t},p^{n}_{t+1})\nonumber\\
    &=&\sum_{h\in G_{t-1}(n)}e^{h}_{t}+\sum_{h\in G_{t}(n)}e^{h}_{t}\nonumber\\
    &=&\sum_{h\in G^{2}_{t-1}}e^{h}_{t}+\sum_{h\in G^{2}_{t}}e^{h}_{t}.\label{eqExisClassic8}
\end{eqnarray}
In particular, for $t=2$ we have
\begin{eqnarray}\label{eqSecondBoundG1}
     \sum_{h\in G^{2}_{1}}\Vert x^{h}_{2}(q^{n}_{1},q^{n}_{2})\Vert\leq \beta \sum_{h\in G^{2}_{1}}\Vert e^{h}_{2}\Vert.
\end{eqnarray}
Therefore, (\ref{eqFirstBoundG1}) and (\ref{eqSecondBoundG1}) imply
\begin{eqnarray*}
    \sum_{h\in G^{2}_{1}}\Vert x^{h}(q^{n}_{1},q^{n}_{2})\Vert\leq \beta \sum_{h\in G^{2}_{1}}\Vert e^{h}\Vert,
\end{eqnarray*}
and, by Assumption \ref{assBoundsPrices}, we have 
\begin{eqnarray}\label{eqExisClassic9}
    (q^{n}_{1},q^{n}_{2})\in\mathcal{B}_{1}(\sigma_{1}),
\end{eqnarray}
for $n\geq1$.

Let $\mathcal{H}^{PO}_{\geq1}$ be the subset of Pareto optimal equilibria from $\mathcal{E}^{2}_{\geq1}$. Notice that (\ref{eqExisClassic7}), (\ref{eqExisClassic8}), and (\ref{eqExisClassic9}) imply $q^{n}\in\mathcal{H}^{PO}_{\geq1}$ (there is a one-period relabeling of time in this assertion since $\mathcal{E}^{2}_{\geq1}$ ``starts'' with generation $G^{2}_{1}$). Theorem \ref{theoExistenceOfParetoOptJME} states that $\mathcal{H}^{PO}_{\geq1}$ is compact. Furthermore, (\ref{eqBoundPn11}) implies that $\{1/p^{n}_{11}\}_{n\geq1}$ is bounded. Therefore, we can assume, passing to a subsequence if necessary, that
\begin{eqnarray}\label{eqLimQn}
\lim_{n\rightarrow\infty}q^{n}=q\in\mathcal{H}^{PO}_{\geq1}\subset{R}^{\infty}_{++},
\end{eqnarray}
and that
\begin{eqnarray}\label{eqLimInversePn11}
    \lim_{n\rightarrow\infty} \frac{1}{p^{n}_{11}}=\mu\geq0.
\end{eqnarray}
In particular, (\ref{eqLimQn}) implies
\begin{eqnarray}\label{eqLimQn2}
    \sum^{\infty}_{t=1}\frac{1}{\Vert q_{t}\Vert\sum_{h\in G^{2}_{t}}\Vert e^{h}_{t}\Vert}=+\infty.
\end{eqnarray}
Since $p^{n}\in\mathcal{H}^{PO}(n)$, $n\geq1$, the last part of Theorem \ref{theoExistenceOfParetoOptJME} implies
\begin{eqnarray*}
    s^{h^{*}_{n}}(p^{n}_{0},p^{n}_{1})+\sum_{h^{\prime}\in G_{0}(n)}s^{h^{\prime}}(p^{n}_{0},p^{n}_{1})=s^{h^{*}_{n}}(1,p^{n}_{11}q^{n}_{1})+1>\delta\biggr(e^{h^{*}_{n}}_{0}+\sum_{h^{\prime}\in G_{0}(n)}e^{h^{\prime}}_{0}\biggr)=2\delta,
\end{eqnarray*}
so that
\begin{eqnarray*}
    x^{h^{*}_{n}}_{0}(1,p^{n}_{11}q^{n}_{1})=\frac{1+p^{n}_{11}\Vert q^{n}_{1}\Vert\theta/n}{1+L_{1}/f(n)}<2(1-\delta).
\end{eqnarray*}
Rearranging terms, we obtain
\begin{eqnarray}\label{eqBoundsPn11}
    0<\frac{p^{n}_{11}}{n}<\frac{(1+L_{1}/f(n))2(1-\delta)-1}{\Vert q^{n}_{1}\Vert\theta}.
\end{eqnarray}
Since $\lim_{n\rightarrow\infty}\Vert q^{n}_{1}\Vert=\Vert q_{1}\Vert>0$, $2(1-\delta)-1>0$ (i.e., $\delta<1/2$) and
\begin{eqnarray*}
    \lim_{n\rightarrow\infty} f(n)=\lim_{n\rightarrow\infty}\frac{n L_{1}\sum_{h\in G_{1}}\Vert e^{h}_{1}\Vert}{2\theta(1+\varepsilon)}=+\infty,
\end{eqnarray*}
(\ref{eqBoundsPn11}) implies that $\{p^{n}_{11}/n\}_{n\geq1}$ is bounded. Then, (\ref{eqBoundPn11}) implies
\begin{eqnarray}
    \lim_{n\rightarrow\infty}x^{h^{*}_{n}}_{1}(p^{n}_{0},p^{n}_{1})&=&\lim_{n\rightarrow\infty}x^{h^{*}_{n}}_{1}(1,p^{n}_{11}q^{n}_{1})\nonumber\\
    &=&\lim_{n\rightarrow\infty}\frac{1+p^{n}_{11}\Vert q^{n}_{1}\Vert\theta/n}{f(n)+L_{1}}\frac{1}{p^{n}_{11}}\biggr(\frac{1}{q^{n}_{11}},\ldots,\frac{1}{q^{n}_{1L_{1}}}\biggr)\nonumber\\
    &< &\frac{1}{\lambda}\lim_{n\rightarrow\infty}\frac{1+p^{n}_{11}\Vert q^{n}_{1}\Vert\theta/n}{f(n)+L_{1}}\biggr(\frac{1}{q^{n}_{11}},\ldots,\frac{1}{q^{n}_{1L_{1}}}\biggr)\nonumber\\
    &=&0\label{eqHstarVanish}.
\end{eqnarray}
Next, since $q\in \mathcal{H}^{PO}_{\geq1}$, we have
\begin{eqnarray}\label{eqMarketClearingLimitq}
     \sum_{h\in G^{2}_{t-1}}x^{h}_{t}(q_{t-1},q_{t})+\sum_{h\in G^{2}_{t}}x^{h}_{t}(q_{t},q_{t+1})=\sum_{h\in G^{2}_{t-1}}e^{h}_{t}+\sum_{h\in G^{2}_{t}}e^{h}_{t},
\end{eqnarray}
for $t\geq2$. The market clearing equation for $\mathcal{E}(n)$ at $t=1$ is given by
\begin{eqnarray*}
    e^{h^{*}_{n}}_{1}+\sum_{h\in G^{2}_{0}} e^{h}_{1}+\sum_{h\in G^{2}_{1}} e^{h}_{1}
    &=&x^{h^{*}_{n}}_{1}(p^{n}_{0},p^{n}_{1})+\sum_{h^{\prime}\in G_{0}(n)} x^{h^{\prime}}_{1}(p^{n}_{0},p^{n}_{1})+\sum_{h\in G^{2}_{1}} x^{h}_{1}(p^{n}_{1},p^{n}_{2})\\
    &=&x^{h^{*}_{n}}_{1}(1,p^{n}_{1})+\sum_{h\in G^{2}_{0}} x^{h}_{1}(1/p^{n}_{11},q^{n}_{1})+\sum_{h\in G^{2}_{1}} x^{h}_{1}(q^{n}_{1},q^{n}_{2}),
\end{eqnarray*}
for $n\geq1$. Then, (\ref{eqHstarVanish}) and $\lim_{n\rightarrow+\infty}e^{h^{*}_{n}}_{1}=0$ imply
\begin{eqnarray}\label{eqMarketClearingLimitT1}
    \sum_{h\in G^{2}_{0}} x^{h}_{1}(\mu,q_{1})+\sum_{h\in G^{2}_{1}} x^{h}_{1}(q_{1},q_{2})=\sum_{h\in G^{2}_{0}} e^{h}_{1}+\sum_{h\in G^{2}_{1}} e^{h}_{1}.
\end{eqnarray}
Lemma \ref{lemmaSecondParetoOptEquilibria} and $p^{n}\in\mathcal{H}^{PO}(n)$, $n\geq1$, imply that the aggregate real savings of generation $G_{1}(n)$ from the economy $\mathcal{E}(n)$ satisfy 
\begin{eqnarray*}
    \delta&<& \frac{\sum_{h\in G_{1}(n)}s^{h}(p^{n}_{1},p^{n}_{2})}{\sum_{h\in G_{1}(n)} \Vert e^{h}_{1}\Vert}\\
    &=&\frac{p^{n}_{1} }{\Vert p^{n}_{1}\Vert\sum_{h\in G^{2}_{1}} \Vert e^{h}_{1}\Vert}\cdot \sum_{h\in G^{2}_{1}}(e^{h}_{1}-x^{h}_{1}(p^{n}_{1},p^{n}_{2}))\\
    &=&\frac{q^{n}_{1} }{\Vert q^{n}_{1}\Vert\sum_{h\in G^{2}_{1}} \Vert e^{h}_{1}\Vert}\cdot\biggr(x^{h^{*}_{n}}_{1}(p^{n}_{0},p^{n}_{1})-e^{h^{*}_{n}}_{1}+\sum_{h\in G^{2}_{0}} (x^{h}_{1}(1/p^{n}_{11},q^{n}_{1})- e^{h}_{1})\biggr),
\end{eqnarray*}
where the last equality is due to the market clearing equation in period $t=1$. Letting $n\rightarrow\infty$, we obtain
\begin{eqnarray*}
    \frac{\sum_{h\in G^{2}_{0}}q_{1}\cdot(x^{h}_{1}(\mu,q_{1})- e^{h}_{1})}{\Vert q_{1}\Vert \sum_{h\in G^{2}_{1}} \Vert e^{h}_{1}\Vert}\geq\delta.
\end{eqnarray*}
Then, Walras' law implies
\begin{eqnarray*}
     \frac{\sum_{h\in G^{2}_{0}}q_{1}\cdot(x^{h}_{1}(\mu,q_{1})- e^{h}_{1})}{\Vert q_{1}\Vert \sum_{h\in G^{2}_{1}} \Vert e^{h}_{1}\Vert}=\frac{\sum_{h\in G^{2}_{0}} \mu m^{h}}{\Vert q_{1}\Vert  \sum_{h\in G^{2}_{1}} \Vert e^{h}_{1}\Vert}=\frac{\mu}{\Vert q_{1}\Vert \sum_{h\in G^{2}_{1}} \Vert e^{h}_{1}\Vert}\geq\delta>0,
\end{eqnarray*}
and, therefore, $\mu>0$. Since $\mathcal{E}^{2}$ satisfies Assumption \ref{assCassCriterion}, (\ref{eqLimQn2}), (\ref{eqMarketClearingLimitq}) and (\ref{eqMarketClearingLimitT1}) allow us to conclude that $(\mu,q)\in\mathbb{R}^{\infty}_{++}$ is an optimal monetary equilibrium of $\mathcal{E}^{2}$.
\end{proof}


\printbibliography
\end{document}